\newcommand{\blind}{0}
\def\spacingset#1{\renewcommand{\baselinestretch}%
{#1}\small\normalsize} \spacingset{1}
\tikzstyle{line} = [draw, -latex']
\let\oldbibliography\thebibliography
\renewcommand{\thebibliography}[1]{\oldbibliography{#1}
\setlength{\itemsep}{.8pt}} 
\newcommand{\expect}{\mathsf{E}\expectarg}
\DeclarePairedDelimiterX{\expectarg}[1]{[}{]}{%
  \ifnum\currentgrouptype=16 \else\begingroup\fi
  \activatebar#1
  \ifnum\currentgrouptype=16 \else\endgroup\fi
}
\newcommand{\innermid}{\nonscript\;\delimsize\vert\nonscript\;}
\newcommand{\activatebar}{%
  \begingroup\lccode`\~=`\|
  \lowercase{\endgroup\let~}\innermid 
  \mathcode`|=\string"8000
}
\newtheorem{proposition}{Proposition}
\newtheorem{corollary}{Corollary}
\newtheorem{theorem}{Theorem}[section]
\newtheorem{definition}{Definition}
\newtheorem{lemma}{Lemma}
\newtheorem{proof}{Proof}
\begin{document}

\if 0\blind
{
  \title{\bf A Harris process to model stochastic volatility}
  \author{Michelle Anzarut  and Rams\'es H. Mena}
    \date{}
  \maketitle
  \begin{quote}
	\begin{small}
	\noindent Universidad Nacional Aut\'onoma de M\'exico, Mexico.\\  \textit{E-mail}: michelle@sigma.iimas.unam.mx; ramses@sigma.iimas.unam.mx 
	\end{small}
	\end{quote}
} \fi

\if1\blind
{
  \bigskip
  \bigskip
  \bigskip
  \begin{center}
    {\LARGE\bf A Harris process to model stochastic volatility}
      \date{}
\end{center}
  \medskip
} \fi

\bigskip

\begin{abstract}
	We present a tractable non-independent increment process which provides a high modeling flexibility. The process lies on an extension of the so-called Harris chains to continuous time being stationary and Feller. We exhibit constructions, properties, and inference methods for the process. Afterwards, we use the process to propose a stochastic volatility model with an arbitrary but fixed invariant distribution, which can be tailored to fit different applied scenarios. We study the model performance through simulation while illustrating its use in practice with empirical work. The model proves to be an interesting competitor to a number of short-range stochastic volatility models. 
\end{abstract}

\noindent%
{\it Keywords:}  Harris process, piecewise constant volatility, stationary process, stochastic process estimation, volatility forecasting
\vfill

\newpage
\spacingset{1.44} 

\section{Introduction}\label{section:Intro}

When studying random phenomena evolving in continuous time, Markov processes are the cornerstone models. Much of their application is confined to the subclass of Lévy processes, which are processes with independent, stationary increments. The popularity of the Lévy class is in part due to the existing theory and mechanisms that facilitate their operation. Often, these make possible to perform computations explicitly, and to present challenging results in a simple manner. Nevertheless, empirical data, or theoretical considerations, could suggest the phenomena have a different dependence structure. \citet{bottcher2010feller} documents this issue well, providing examples in areas such as hydrology, geology, and mathematical finance, where state-space dependent models provide a better fit.  

Inducing a state-space dependence in a stochastic process may increase its statistical complexity to a significant degree. The fundamental problem is that, typically, a single realization is observed.  Within the Lévy processes framework, the increments of such a realization form an independent sample of a probability distribution. Thus, statistical inference becomes simpler. Outside the Lévy case, some stability properties can be assumed to make the inference manageable. 

A middle ground between the Lévy processes tractability, and the  Markovian processes generality, arises when one looks at the class of Harris recurrent processes. Harris recurrence means that every non-trivial state is visited infinitely often with probability one \citep[see, e.g.,][]{meyn1993stability}. This class of processes has the advantage of allowing limiting results, which serve as a tool for statistical analysis, while permitting a wide range of sample behaviors. 

In the discrete-time case, a Harris recurrent Markov process is known as a Harris chain. When an aperiodic Harris chain $X=(X_n)_{n \in \mathbb{N}}$ takes values in a separable metric space, the splitting technique \citep{nummelin1978splitting} allows to write its transition functions as a weighted sum between two probability measures, one depending on the starting point, and the other one independent. So for $\epsilon \in (0,1)$,
\begin{align}  \label{transition0}
	 \mathds{P}(X_{n+1}  \in A | X_n = x) = (1-\epsilon)Q(A) + \epsilon \mu_x(A),
\end{align}
where $Q$ is a probability measure, and $\mu_x$ is a probability measure for each $x$. 

Harris chains exhibit a wide-sense regenerative structure. This means there exist certain regeneration times which allow the trajectory of the process to be split into identically distributed cycles. Indeed, representation \eqref{transition0} has the advantage of explicitly marking the regeneration. The process will be dependent on the previous point $x$ with  probability $\epsilon$, or it will take a new value, independent of $x$, with the complimentary probability, thus starting a new cycle. 

The structure of \eqref{transition0}  also clarifies that the chain flexibility relies in two aspects. One is the dependence structure $\mu_x$, and the other is the model distributional properties, build upon $Q$. A trade-off between the generality of distributions $\mu_x$ and $Q$ is usually made since, often, a flexible model can be obtained by focusing solely on the generality of one of them. For instance, assuming a multimodal  $Q$, it is possible to model with stationary processes some trajectories typically associated to non-stationarity \citep[see, e.g.,][]{antoniano2016nonparametric}. Hence, a flexible $Q$ allows a simple choice of $\mu_x$.   In the opposite direction, complex dependence structures may compensate simple choices of distributional features. Such is the case, for example, of several non-linear autorregresive processes.  

Here, we will let $Q$ to be arbitrary, and restrict ourselves to a simple dependence structure, setting $\mu_x$ as a degenerate probability distribution in the value $x$. Denoting the resulting Harris chain by $Y=(Y_n)_{n \in \mathbb{N}}$, we then have
\begin{align} \label{transition_chain} 
	 \mathds{P}(Y_{n+1}  \in A | Y_n = x) = (1-\epsilon)Q(A) + \epsilon \delta_x(A),
\end{align}
where $\delta_x(A)$ is the Dirac measure. When $\epsilon = 0$ transitions \eqref{transition_chain} correspond to an independent process, and $\epsilon = 1$ results in a completely dependent process with constant paths.  
 
We can extend a chain with transition functions \eqref{transition_chain} to a continuous-time Markov process $H = (H_t)_{t \geq 0}$ by making the parameter $\epsilon$ a function of time, $t \rightarrow \epsilon(t)$. Such an extension preserves entirely the chain virtues. The resulting non-independent-increment process will be Harris recurrent, and will model in a simple manner the similarity between observations, while providing a flexible approach.  Performing this extension, the Chapman-Kolmogorov equation leads to the transition probabilities  $P_t(x, A) = \mathds{P}(H_{t}  \in A | H_0 = x)$ given by
\begin{align} \label{transition} 
	P_t(x,A) = (1-e^{-\alpha t})Q(A) + e^{-\alpha t} \delta_x(A), 
\end{align}
for some $\alpha >0$, where $Q$ is a probability distribution. Over the work, we will assume a random starting point $x$ with distribution $Q$.

In continuous time, whether all Harris recurrent Markov processes have a wide-sense regenerative structure is an open problem \citep{glynn2011wide}. This has caused the definition of Harris processes to be somewhat combined in the literature. We will say a continuous-time stochastic process is Harris if it is strongly Markovian, Harris recurrent, and wide-sense regenerative. The extension of Harris chains to \eqref{transition}  falls in this class, since the wide-sense regenerative structure follows automatically. The process has piecewise constant paths, it stays in its current state an exponential time, before jumping to another state randomly sampled from $Q$. Furthermore, in Section \ref{section:Harris_process}, we show that the process with transition probabilities \eqref{transition} is stationary and, in fact, it is the only Feller process generalizing \eqref{transition0} to continuous time. Due to these observations, we will term such a process as the stationary-Feller-Harris process, denoted with \textit{SF-Harris process}. 
 
In this paper, we investigate the SF-Harris process features, and then apply the process to measure the variability level of asset prices in a financial market. The use of piecewise constant volatilities in models has proven to be an efficient modeling technique. For example, \cite{mercurio2004statistical} suppose the volatility is piecewise constant. They address the problem of filtering the intervals of time homogeneity, and then estimate the volatility by local averaging. Also, \cite{davies2009recursive} propose a piecewise constant volatility function, constructing it so that the number of intervals of constancy is minimized.  

The strength of the proposed stochastic volatility (SV) model relies on its mathematical tractability, for (i) it has attractive and flexible stability properties, (ii) it shares appealing characteristics with popular SV models, and (iii) it provides a good approximation to observed market behavior. These while having a simple transition form which allows to understand, estimate, and test the model.   

Continuous-time models offer the natural framework for theoretical option pricing. As a result, they have an important role in the literature since the mid-1980s, being the major process used a Brownian motion timed changed by a subordinator representing the volatility. An overview on SV models is given by \citet{shephard2009stochastic}. Assuming friction-less markets, a weak no-arbitrage condition implies the asset log-price is a semimartingale. This leads to the formulation $Y_t  = A_t+ B_{\tau^*_t}$, where $A = (A_t)_{t \geq 0}$ is a finite variation process, $B = (B_t)_{t \geq 0}$ is Brownian motion, and  $\tau^ *= (\tau^*_t)_{t \geq 0}$ is a time change, all three stochastically independent. A popular choice for $A$ is $A_t = \mu t + \beta \tau^*_t$ for a pair of constants $\mu$ and $\beta$, often referred to as the drift and the risk premium. The time change $\tau^*$ is a positive subordinator, which is a real-valued process with non-negative and non-decreasing sample paths. At the outset, $\tau^*$ was assumed to be a Lévy subordinator \citep{clark1973subordinated}.  However, empirical data suggested that independence of increments is not observed in time-series of returns. Properties, such as volatility clusters, indicate the amplitude of returns is positively autocorrelated in time. Surveys on this matter are given by \citet{bollerslev1994arch, ghysels1996stochastic}; and \citet{shephard1996statistical}. 

To address the non-independence of the volatility increments new continuous-time models have arisen. A popular approach is the diffusion-based models \citep[see, e.g.,][]{hull1987pricing, wiggins1987option}. Such models are in fact special cases of time-changed Brownian motions, where the time change is an integrated process, $\tau^*_t = \int_0^t \tau_s ds$. The process $\tau= (\tau_t)_{t \geq 0}$, identified as the spot volatility, is assumed to have almost surely locally square integrable sample paths, while being positive and stationary. Indeed, integrated processes make sense theoretically, since they are a natural choice to consider the unobserved volatility period which appears when working with a time discretization. 

Considering all these aspects, the resulting equation is 
\begin{align} \label{SVmod}
	Y_t = \mu t + \beta \tau^*_t+ B_{\tau^*_t}, \ \text{where}  \  \tau^*_t = \int_0^t \tau_s ds.
\end{align}
Several models may originate by assuming different processes $\tau$ in \eqref{SVmod}. In particular, the SV model introduced in this paper lies in this class, where $\tau$ is the SF-Harris process with an invariant distribution $Q$ with positive support. Other examples are the standard Black-Scholes model, which is recovered when $\tau$ is a positive constant, and the notorious models proposed in \citet{barndorff2001non} (hereafter termed BNS models), where $\tau$ follows an Ornstein-Uhlenbeck-type  (OU-type) equation. 

Estimating and testing for the different choices of $\tau$ presents a number of difficulties. Novel simulation strategies have been developed over the years which allow us to test some processes $\tau$ in real data \citep[see, e.g.,][]{griffin2010bayesian, griffin2016inference, gander2007stochastic}. Nevertheless, we believe an easily tractable yet flexible model is of significant importance; there is a clear need of a parsimonious model which is good predicting and, perhaps just as importantly, that can be implemented without requiring a lot of user interaction, so that non-experts can run it every day in real applications. All these goals are considered when defining, estimating and testing the model over the work.

Research in the late 1990s has shown more complicated volatility dynamics are needed to model high-frequency return data. Among the most common extensions to model \eqref{SVmod} we can find adding to the volatility a periodic component to model systematic patterns;  adding to the returns a finite activity jump process to model infrequent jumps; and considering estimators robust to microstructure noise. Seminal references for these extensions are \citet{andersen1997intraday,eraker2002theimpact}; and \citet{bandi2006separating} respectively. We will address the first two of such extensions for our SV model, adding a jump and a periodic component. Furthermore, we shall see that, as a byproduct of our proposal, we also generalize a model often applied when considering periodicity. Such a model assumes the volatility, after filtering out the periodicity, approximately constant over each day.  The advantage is that we are thinking on the volatility as a piecewise constant process, with random jumps instead of fixed ones. Still, the estimation method for the constant-over-each-day periodicity is extended in a natural way. 

The rest of the paper continues as follows. We begin in Section \ref{section:Harris_process} by formally presenting the SF-Harris process. We proceed by testing estimation methods in Section \ref{section:estimation}. In Section \ref{section:SV_model} we develop the SV model along with its practical implementation. In Section \ref{section:empirical_analysis} we adjust the SV model to IBM return data. Finally, in Section \ref{section:concluding} we give some concluding remarks and discuss directions for future research. In particular, we give a possible extension of the SV model to a long memory model. The proofs, certain additional results, and intermediate steps of the estimation methods are presented in the Appendix. All the methodology is implemented in the \textbf{R} project for statistical computing.

\section{The SF-Harris process}\label{section:Harris_process}

\subsection{Definition and relevant properties}

This section is devoted to the study of the SF-Harris process, summarizing some of its constructions and important properties. Specifically, we focus on useful properties when modeling spot volatility of asset prices.

\begin{definition}
	The \textnormal{SF-Harris process} $H = (H_t)_{t \geq 0}$ is a stochastic process taking values in a measurable space $(E, \mathcal{E})$, evolving in continuous time, and driven by the transition probability functions \eqref{transition}, where $x \sim Q$.  
\end{definition}

\begin{figure}[p]
    \centering
    \includegraphics[scale = .9]{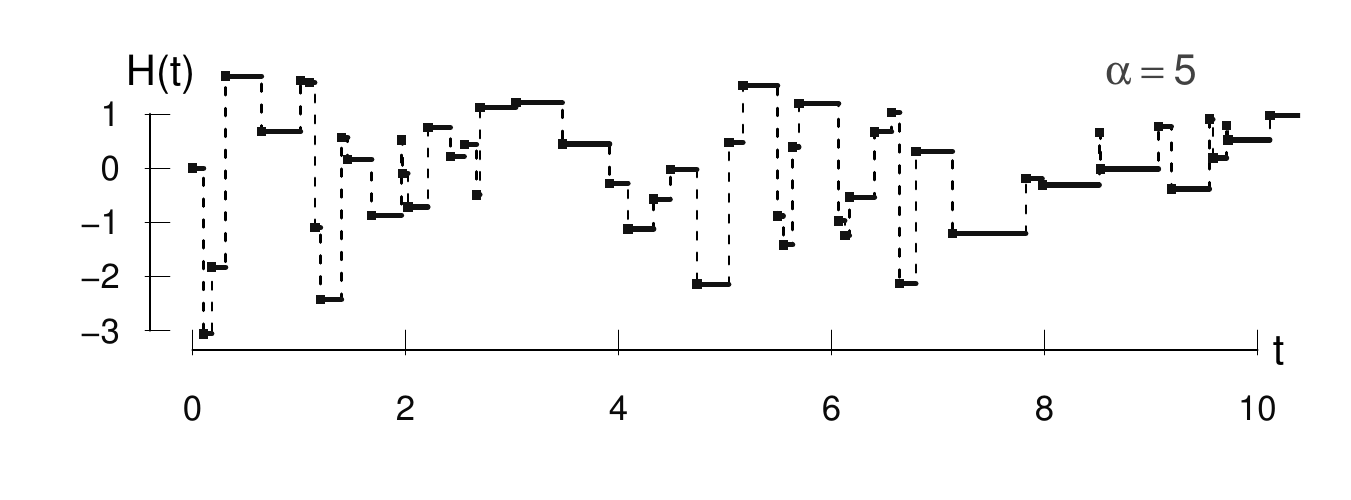}
    \includegraphics[scale = .9]{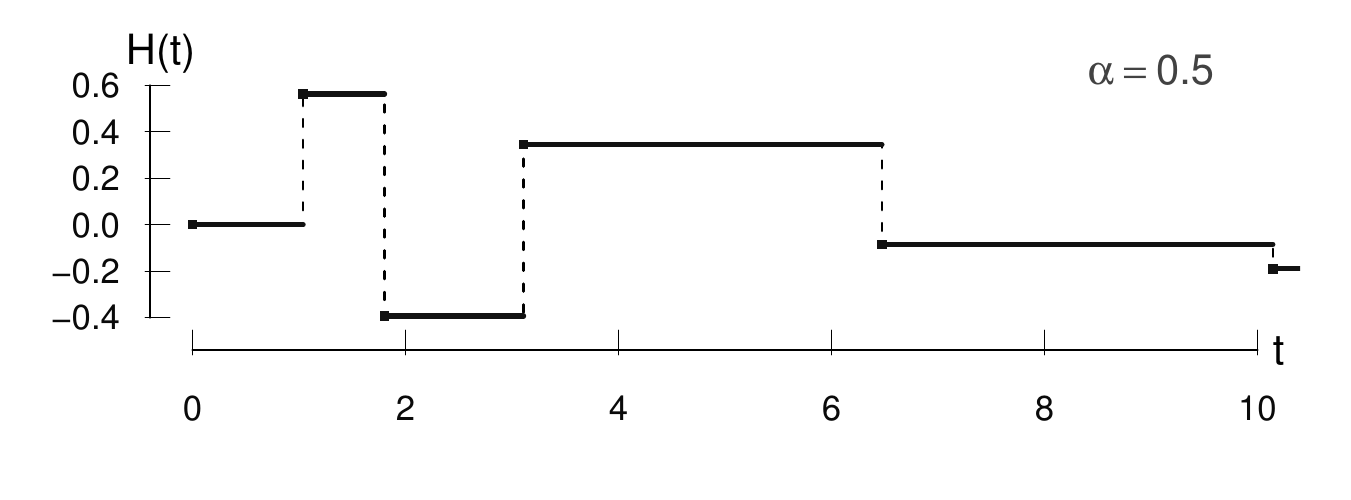}
    \caption{Trajectories of the SF-Harris process $H$, where $Q$ is a standard Normal.}
    \label{fig:alphas} 
\end{figure}

\begin{figure}[p]
    \centering
    \includegraphics[scale = .9]{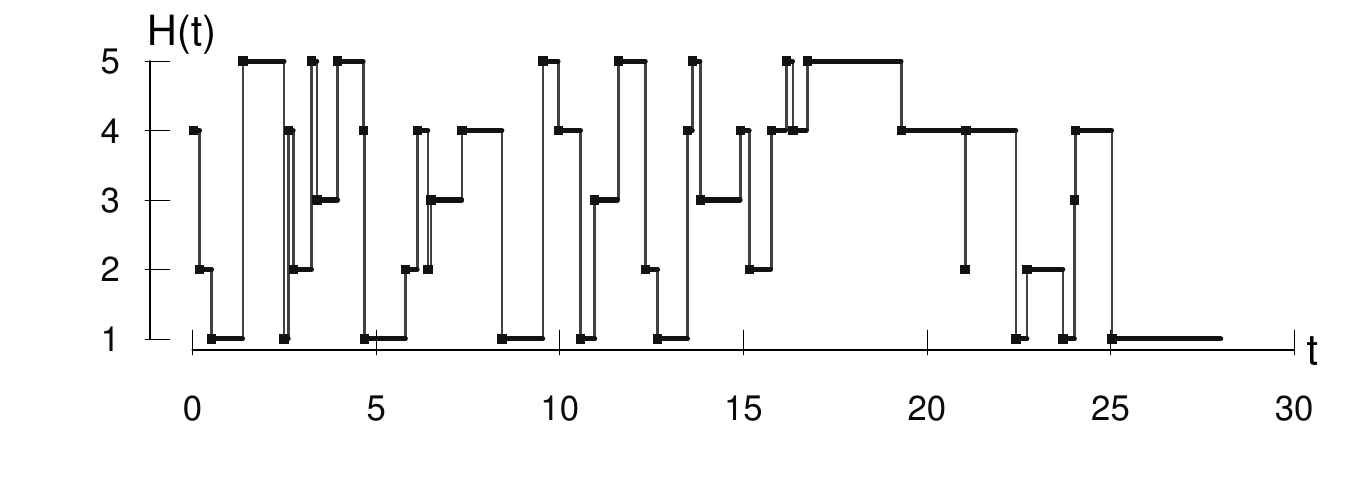}
    \caption{Trajectory of the SF-Harris process $H$, where $Q$ is a $\mathsf{U(1,2,3,4,5)}$ and $\alpha = 2$.}
    \label{fig:unif}
\end{figure} 

Figures \ref{fig:alphas} and \ref{fig:unif} illustrate some SF-Harris process paths; we can observe the distribution $Q$ modulates the marginal behavior at each time, and the parameter $\alpha$ sets  the rate at which the process jumps.  So far, we have stated that the process is Markovian, Harris recurrent, and that it has non-independent increments. The definition of Markov processes is sometimes too general for applications. However, a flexible-enough subclass of Markov processes satisfy the Feller conditions, (i) $\mathsf{T_t}C_0 \subset C_0$ for all $t \geq 0$, (ii)  for any $f \in C_0$, $\mathsf{T_t}f(x) \rightarrow f(x)$ when $t \downarrow 0$, where $C_0$ denotes the space of real-valued continuous functions that vanish at infinity and $\mathsf{T}$ denotes the semigroup operator defined as $\mathsf{T_t}f(x) = \int f(y)P_t(x,dy)$.

Once it is known that a process is Feller, many useful properties follow, such as the existence of right-continuous with left-hand limits paths modifications, the strong Markov property, and right-continuity of the filtration. The SF-Harris process belongs to this class since its semigroup operator is given by
\begin{align} \label{semigroup}
	\mathsf{T_t}f(x) =  (1-e^{-\alpha t}) Qf + e^{-\alpha t} f(x),
\end{align}
where $Qf = \int f(y)Q(dy)$. Hence, it is direct to verify the operator \eqref{semigroup} meets the Feller conditions.

Let us consider what kind of process we would obtain if we extend the wider class of chains \eqref{transition0} to continuous time. The transition probabilities
\begin{align} \label{transition0_process} 
	K_t(x,A) = (1-e^{-\alpha t})Q(A) + e^{-\alpha t} \mu_x(A)
\end{align}
are Markovian, since $K_t(x,A) = \int P_t(y,A)\mu_x(dy)$, where $P_t$ denotes the transition probabilities \eqref{transition} which are Markovian. It also follows the associated processes are wide-sense regenerative. However, only a subclass of \eqref{transition0_process} possesses a unique invariant measure, integrated by the kernels $\mu_x$ which keep $Q$ invariant. Moreover, the semigroup operator corresponding to \eqref{transition0_process} converges to $\int f(y) \mu_x(dy)$ as $t \downarrow 0$. Thus, the semigroup meets the Feller conditions only when $\mu_x = \delta_x$, in which case we return to \eqref{transition}. Due to this, the term SF-Harris process is justified.  

We defined the SF-Harris process through its transition probabilities, but there exists a unique Feller process with right-continuous with left-hand limits paths driven by transition probabilities  \eqref{transition}. An explicit representation can be found by uniformizing the chain $Y$ with transition probabilities \eqref{transition_chain}. Let us expand on this. Assuming $\epsilon \in (0,1)$, consider the process $Y_N = (Y_{N_t})_{t \geq 0}$, where $N = (N_t)_{t \geq 0}$ is a Poisson process of rate $\lambda > 0$ stochastically independent of $Y$. 

It is easy to prove inductively that the k-step  transitions of $Y$ are given by
\begin{align*}
	 \mathds{P}(Y_{k}  \in A | Y_0 = x) = (1-\epsilon^k)Q(A) + \epsilon^k\delta_x(A).
\end{align*}
Therefore,
\begin{align*}
	\mathds{P}(Y_{N_t}  \in A | Y_0 = x) &= \sum_{k=1}^\infty  \mathds{P}(Y_{k}  \in A | Y_0 = x)\dfrac{(\lambda t)^k e^{-\lambda t}}{k!} \\
	&= \left(1-e^{-\lambda t (1-\epsilon)}\right)Q(A) + e^{-\lambda t (1-\epsilon)} \delta_x(A).
\end{align*}
By making $\lambda = \alpha(1-\epsilon)^{-1}$, we recover \eqref{transition}. Consequently, we have the following stochastic representation which is useful for simulating and estimating the process.

\begin{theorem}  \label{prop:explicit_form}
	Let $Y = (Y_n)_{n \in \mathbb{N}}$ be a Markov chain with transition functions given by \eqref{transition_chain}, where $\epsilon \in [0,1)$, $Q$ is a probability distribution, and $x \sim Q$. If  $N= (N_t)_{t \geq 0}$ is a Poisson process with rate $\lambda=\alpha(1-\epsilon)^{-1}$ independent of $Y$, then the SF-Harris process, with marginal distribution $Q$ and jump parameter $\alpha$, has the stochastic representation $H_t = Y_{N_t}$ for all $t \geq 0$. 
\end{theorem}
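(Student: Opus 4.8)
The plan is to show that the subordinated process $Y_{N_t}$ is a time-homogeneous Markov process whose transition semigroup is exactly \eqref{transition}, and then to invoke the uniqueness (recorded above) of the rcll Feller process attached to those transitions.

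First I would compute the $k$-step transition kernel of the chain $Y$ by induction on $k$. The case $k=1$ is \eqref{transition_chain}, and the inductive step is Chapman--Kolmogorov combined with the elementary identity
\[
\int \big[(1-\epsilon)Q(A) + \epsilon\,\delta_y(A)\big]\big[(1-\epsilon^{k})Q(dy) + \epsilon^{k}\delta_x(dy)\big] = (1-\epsilon^{k+1})Q(A) + \epsilon^{k+1}\delta_x(A),
\]
obtained using $\int \delta_y(A)\,Q(dy) = Q(A)$ and $\int \delta_y(A)\,\delta_x(dy) = \delta_x(A)$; the resulting formula $\mathds{P}(Y_k\in A\mid Y_0=x)=(1-\epsilon^k)Q(A)+\epsilon^k\delta_x(A)$ also holds trivially at $k=0$. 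Then, conditioning on $N_t$ and using that $N$ is independent of $Y$, I would sum these kernels against the $\mathsf{Poisson}(\lambda t)$ weights $e^{-\lambda t}(\lambda t)^k/k!$; recognising the two exponential series yields $(1-e^{-\lambda t(1-\epsilon)})Q(A)+e^{-\lambda t(1-\epsilon)}\delta_x(A)$, which becomes $P_t(x,A)$ upon substituting $\lambda=\alpha(1-\epsilon)^{-1}$. The case $\epsilon=0$ is covered: then $\lambda=\alpha$, the $Y_n$ are i.i.d.\ $Q$, and the same computation applies.

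The step I expect to be the actual content, rather than a routine computation, is establishing that $H_t:=Y_{N_t}$ is genuinely Markov with semigroup $\{P_t\}$, not merely a process whose one-dimensional conditional laws happen to match. Here the key is that $N$ has independent, stationary increments and is independent of $Y$: conditionally on $N_s=m$ and $Y_m=y$, the future $(Y_{N_{s+u}})_{u\ge 0}$ is the chain $(Y_{m+n})_{n\ge 0}$ subordinated by the increment process $(N_{s+u}-N_s)_{u\ge 0}$, which is again a rate-$\lambda$ Poisson process independent of $(Y_n)_{n\le m}$; by the Markov property of $Y$ this depends on the pre-$s$ history only through $Y_m=y$, so $\mathds{P}(H_{s+u}\in A\mid \mathcal{F}^H_s)=P_u(H_s,A)$. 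Since $\{P_t\}$ already satisfies Chapman--Kolmogorov (it defines the SF-Harris process), $H$ is a time-homogeneous Markov process with that semigroup, and it has piecewise-constant, right-continuous paths with left limits because it only moves at jump epochs of $N$.

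Finally I would assemble the pieces. The semigroup $\{P_t\}$ is Feller by \eqref{semigroup}; the distribution $Q$ is invariant for $Y$, since $\int P_1(x,\cdot)\,Q(dx)=Q$ follows directly from \eqref{transition_chain}, so $Y_0=x\sim Q$ forces $Y_n\sim Q$ for all $n$ and hence $H_t=Y_{N_t}\sim Q$ for all $t$. Combining this with the uniqueness of the rcll Feller process driven by \eqref{transition}, the constructed process $H=(Y_{N_t})_{t\ge 0}$ is the SF-Harris process with marginal distribution $Q$ and jump parameter $\alpha$, as claimed.
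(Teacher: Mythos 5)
Your proof is correct and follows essentially the same route as the paper: an induction giving the $k$-step kernel $(1-\epsilon^k)Q(A)+\epsilon^k\delta_x(A)$, followed by summation against the Poisson$(\lambda t)$ weights and the substitution $\lambda=\alpha(1-\epsilon)^{-1}$ to recover \eqref{transition}. The only difference is that you additionally verify that the subordinated process is Markov with semigroup $\{P_t\}$ and appeal to the uniqueness of the rcll Feller process, a point the paper leaves implicit through the pseudo-Poisson construction; this is a welcome tightening rather than a different argument.
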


Theorem \ref{prop:explicit_form} implies the SF-Harris process is a pseudo-Poisson process, defined by \cite{feller1966introduction} as a continuous-time process that can be obtained from Markov chains by subordination with a Poisson process. The case $\epsilon = 1$ corresponds to the limit $\alpha \rightarrow \infty$, where no jumps occur so the process has constant paths. The case $\epsilon = 0$ corresponds to independent identically distributed random variables $(Y_n)_{n \in \mathbb{N}}$, it is included in the theorem taking into account that  $P^0(x,A)=\delta_x(A)$.

Notice that the parameter $\lambda = \alpha(1-\epsilon)^{-1}$ ranges in the positive real line. Fixing a value for $\alpha$, when $\epsilon$ grows $\lambda$ gets smaller. In other words, when the dependence in the paths of the Harris chain $Y$ grows, then the Poisson process $N$ introduces less dependence, in order to get exactly the same dependence rate $\alpha$ in the uniformized process $H = Y_N$. For the construction or simulation of a SF-Harris process using Theorem \ref{prop:explicit_form} any value $\epsilon$ can be used. In particular, we stick to the simpler case $\epsilon = 0$ which results in $\lambda = \alpha$. The following corollary is useful for some calculations.

\begin{corollary}  \label{corol:explicit_form}
	Let $Y$ be as in Theorem \ref{prop:explicit_form}, and let $(T_n)_{n \in \mathbb{N}}$ be a sequence of random variables independent of $Y$ whose increments $(S_n)_{n \in \mathbb{N}}$, defined as $S_n = T_n-T_{n-1}$, are exponential, independent, and identically distributed variables with mean $(1-\epsilon)/\alpha$, i.e., $S_n \sim \mathsf{Exp}\left\{ \alpha(1-\epsilon)^{-1}\right\}$. Then the SF-Harris process, with marginal distribution $Q$ and jump parameter $\alpha$, has the stochastic representation
	\begin{align} \label{eq:defX}
		H_t = x \mathds{1}_{t<T_1}+\sum\limits_{n=1}^\infty Y_n \mathds{1}_{t \in {[T_n, T_{n+1})} }, \ \text{for all} \ t \geq 0.
	\end{align}
\end{corollary}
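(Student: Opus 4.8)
The plan is to derive the representation \eqref{eq:defX} directly from the stochastic representation $H_t = Y_{N_t}$ supplied by Theorem \ref{prop:explicit_form}, by invoking the classical pathwise construction of a Poisson process from its inter-arrival times. First I would recall that a Poisson process $N = (N_t)_{t \geq 0}$ of rate $\lambda > 0$ admits the realization $N_t = \sup\{ n \geq 0 : T_n \leq t\}$, where $T_0 = 0$ and the increments $S_n = T_n - T_{n-1}$ form an i.i.d. sequence of $\mathsf{Exp}(\lambda)$ variables; equivalently, $N_t = n$ precisely on the event $\{t \in [T_n, T_{n+1})\}$. Taking $\lambda = \alpha(1-\epsilon)^{-1}$ as in Theorem \ref{prop:explicit_form}, each $S_n$ has mean $1/\lambda = (1-\epsilon)/\alpha$, which matches the hypothesis of the corollary, and the independence of $(T_n)_{n \in \mathbb{N}}$ from $Y$ is inherited from that of $N$ from $Y$, since $(T_n)$ is a measurable functional of $N$.

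Next I would substitute this pathwise description of $N$ into $H_t = Y_{N_t}$. On $[0, T_1)$ one has $N_t = 0$, hence $H_t = Y_0 = x$; on each $[T_n, T_{n+1})$ with $n \geq 1$ one has $N_t = n$, hence $H_t = Y_n$. Since the intervals $\{[T_n, T_{n+1})\}_{n \geq 0}$ partition $[0,\infty)$ almost surely, summing over $n$ yields
\begin{align*}
	H_t = x \, \mathds{1}_{t < T_1} + \sum_{n=1}^{\infty} Y_n \, \mathds{1}_{t \in [T_n, T_{n+1})}, \qquad t \geq 0,
\end{align*}
which is exactly \eqref{eq:defX}. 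The degenerate case $\epsilon = 0$ is included verbatim, the inter-arrival mean then being $1/\alpha$; the only convention needed is $P^0(x,A) = \delta_x(A)$, already recorded after Theorem \ref{prop:explicit_form}, which ensures $\mathds{P}(Y_{N_t} \in A \mid Y_0 = x)$ still reduces to \eqref{transition}.

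There is essentially no obstacle: the entire content is contained in Theorem \ref{prop:explicit_form}, and the corollary is merely a rewriting of $H_t = Y_{N_t}$ in terms of the jump times of the subordinating Poisson process. The single point deserving a word of care is the almost-sure divergence $T_n \to \infty$ (immediate since the $S_n$ are i.i.d. with positive mean), which guarantees that the interval decomposition is exhaustive and that, for each fixed $t$, the infinite sum in \eqref{eq:defX} contributes exactly one nonzero term.
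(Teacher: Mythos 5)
Your proposal is correct and follows the same route the paper intends: the corollary is read off from Theorem \ref{prop:explicit_form} by realizing the subordinating Poisson process through its i.i.d. $\mathsf{Exp}\left\{\alpha(1-\epsilon)^{-1}\right\}$ inter-arrival times, so that $H_t = Y_{N_t}$ equals $x$ on $[0,T_1)$ and $Y_n$ on $[T_n,T_{n+1})$. Your added remarks on the independence of $(T_n)_{n\in\mathbb{N}}$ from $Y$ and on $T_n \to \infty$ almost surely (so the interval decomposition is exhaustive) are exactly the right points of care.
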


As of yet, we have three representations that give an intuition about the SF-Harris process, and facilitate the study of distinct features. For instance, using the transition probability functions \eqref{transition}, it can be easily checked the SF-Harris process is time-reversible with invariant measure $Q$, and so a strictly stationary process. On the other hand, with the representation in Corollary \ref{corol:explicit_form} it can be shown that for any $B$ such that $Q(B)>0$, we have that $\int_0 ^\infty \delta_{H_t}(B)dt $ is infinite almost surely. This means that the state $B$ is visited infinitely often with probability one, since $Q$ is a finite measure, the process is positive Harris recurrent. 

As a consequence of the Harris recurrence, the process will eventually converge to $Q$. An important measure of the speed of convergence is the total variation distance  given in this case by
\begin{align*}
	\sup_{A \in \mathcal{E}} |P_t(x,A)-Q(A)| =e^{-\alpha t} \sup_{A \in \mathcal{E}}  |\delta_x(A)-Q(A)| =e^{-\alpha t},
\end{align*}
which translates to the process being uniformly ergodic.   Using the semigroup operator \eqref{semigroup} it is straightforward to check the SF-Harris process has constant mean and variance which match the ones of the distribution $Q$. Lastly, the process regenerative property is evidenced in a natural way from Corollary \ref{corol:explicit_form} taking $\epsilon = 0$. The independent and identically distributed $(S_n)_{n \in \mathbb{N}}$ are inter-regeneration times were the process starts afresh. Hence, they form a renewal process, and split the stochastic process into a sequence of identically distributed cycles. 
 
These stability properties represent a crucial component in the theory and application of stochastic processes. Indeed, in several modeling contexts, the assumption that some distributional features remain invariant over time is often needed to implement estimation and prediction procedures, or simply to be able to analytically determine quantities of interest. Furthermore, in most cases, no general expression for the transition probabilities is available, particularly in the continuous-time setting. The full control of the transition probabilities driving the SF-Harris process, added to the stability in the process behavior, allow us to understand the process dynamics, and to develop methods for the process application in different contexts, particularly, in the context of stochastic volatility.

\subsection{Integrated SF-Harris process} \label{section:int_process}

In the proposed SV model of this paper, which we develop further in Section \ref{section:SV_model}, the SF-Harris process is used as a spot volatility process. Therefore, the log-prices volatility will match the integrated SF-Harris process $H^* = (H^*_t)_{t\geq 0}$, where $H_t^* = \int_0^{t} H_s ds$. The features shown in the previous section about $H$, allow us to derive in a simple manner similar properties for $H^*$. For example, the stochastic representation in Corollary \ref{corol:explicit_form} provides the means to derive a similar representation for $H^*$. 
\begin{theorem} \label{repH_int} 
	Let $Y$, $(T_n)_{n \in \mathbb{N}}$, and $(S_n)_{n \in \mathbb{N}}$ be as in  Corollary \ref{corol:explicit_form} . If $N_t = \max\{n \in \mathbb{N}:T_n \leq t \}$, then $H^*$ has the stochastic representation
\begin{equation*}
	H^*_t = \sum\limits_{n=0}^{N_t-1} (Y_n-Y_{N_t}) S_{n+1} + Y_{N_t}t,
\end{equation*}
where we let $Y_0$ and $T_0$ be constant random variables equal to $x$ and $0$ respectively.
\end{theorem}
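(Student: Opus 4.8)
The plan is to integrate directly the piecewise-constant representation of $H$ furnished by Corollary \ref{corol:explicit_form} with $\epsilon = 0$. Fix $t \geq 0$ and work on the almost-sure event on which $T_n \uparrow \infty$; this holds because the increments $S_n$ are i.i.d.\ and strictly positive, so $\sum_n S_n = \infty$. On this event $N_t = \max\{n \in \mathbb{N} : T_n \leq t\}$ is finite, hence only finitely many terms of the series defining $H$ are nonzero on $[0,t]$ and the integral $H^*_t = \int_0^t H_s\, ds$ is a finite sum of elementary integrals.

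Next I would split the range of integration at the jump times. By Corollary \ref{corol:explicit_form}, $H_s = Y_n$ for $s \in [T_n, T_{n+1})$, with the conventions $Y_0 = x$ and $T_0 = 0$. Since $T_0 = 0 \leq T_1 \leq \cdots \leq T_{N_t} \leq t < T_{N_t+1}$, the intervals $[T_n, T_{n+1})$ for $n = 0, \ldots, N_t - 1$ partition $[0, T_{N_t})$, and $[T_{N_t}, t]$ is the remaining piece on which $H_s = Y_{N_t}$. By countable additivity of the integral,
\[
	H^*_t = \sum_{n=0}^{N_t-1} Y_n (T_{n+1} - T_n) + Y_{N_t}\,(t - T_{N_t}) = \sum_{n=0}^{N_t-1} Y_n S_{n+1} + Y_{N_t}\,(t - T_{N_t}),
\]
where the empty sum (the case $N_t = 0$, i.e.\ $t < T_1$) is read as $0$, consistently with $H_s \equiv x$ on $[0,t]$ in that situation.

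Finally I would rewrite the boundary term. Telescoping and $T_0 = 0$ give $T_{N_t} = \sum_{n=0}^{N_t-1}(T_{n+1} - T_n) = \sum_{n=0}^{N_t-1} S_{n+1}$, so $Y_{N_t}(t - T_{N_t}) = Y_{N_t}\,t - Y_{N_t}\sum_{n=0}^{N_t-1} S_{n+1}$. Substituting this into the displayed expression and merging the two sums over $n = 0, \ldots, N_t - 1$ yields
\[
	H^*_t = \sum_{n=0}^{N_t-1}(Y_n - Y_{N_t}) S_{n+1} + Y_{N_t}\,t,
\]
which is the asserted identity.

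There is no substantive obstacle here: the argument is just the integration of a step function. The only points that need a moment's care are the a.s.\ finiteness of $N_t$ (so that the interval decomposition is legitimate and the integral is well defined pathwise) and the bookkeeping of the two boundary intervals together with the empty-sum convention when $t < T_1$; both are routine.
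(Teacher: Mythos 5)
Your argument is correct and is essentially the proof given in the paper: both integrate the piecewise-constant representation from Corollary \ref{corol:explicit_form}, decompose $[0,t]$ at the jump times to get $\sum_{n=0}^{N_t-1} Y_n S_{n+1} + Y_{N_t}(t-T_{N_t})$, and then telescope $T_{N_t}=\sum_{n=0}^{N_t-1}S_{n+1}$ to reach the stated form. The only cosmetic difference is that the paper writes the decomposition via truncated terms $t\land T_n$ in an infinite sum rather than invoking a.s.\ finiteness of $N_t$ explicitly, which amounts to the same bookkeeping.
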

Note that, once again, $N = (N_t)_{t\geq 0}$ is a Poisson process of rate $\lambda=\alpha(1-\epsilon)^{-1}$. 

Now, if  $\xi = \expect*{H_t}$, it is straightforward to see that $\expect*{H^*_t}=\xi  t$. Moreover, the ergodicity of $H$ implies that, as $t \rightarrow \infty$, $t^{-1}H^*_t \rightarrow \xi $ almost surely and, consequently, the returns tend to normality. This desirable result is known as aggregational Gaussianity \citep{barndorff2003integrated}. Also, second order moments and correlations for both the integrated process and the returns can be derived explicitly following \citet{barndorff2001non}, where they deal with the general case of the process $\tau$ in \eqref{SVmod} being any second order stationary process.

 \subsection{A semi-Markovian extension} \label{section:semi}
 
When applying the SF-Harris process to model volatility, the resulting SV model is a short-memory model since, for any positive times $t, h$, $\mathsf{Cor}(H_t,H_{t+h}) = e^{-\alpha h}$. To capture more realistic dependence structures we could consider to model the time between jumps with a heavy-tailed distribution, rather than an exponential one. That is, taking $Y$as in Theorem \ref{prop:explicit_form}, we can define a more general process
$\xi = (\xi_t)_{t \geq 0}$, given by
	\begin{align} \label{eq:semi-markov}
		\xi_t = x I_{t<R_1}+\sum\limits_{n=1}^\infty Y_n I_{t \in {[R_n, R_{n+1})} },
	\end{align}
where $(R_n)_{n \in \mathbb{N}}$ is a sequence of random variables independent of $Y$ whose increments $(V_n)_{n \in \mathbb{N}}$, defined as $V_n = R_n-R_{n-1}$, are positive with probability one, independent, identically distributed, and with an arbitrary cumulative distribution function $G$. 

When the increments $(V_n)_{n \in \mathbb{N}}$ are exponential, we recover the SF-Harris process, in which case the lack of memory property of the exponential distribution causes the process to be Markovian. Otherwise, the process will be non-Markovian. However,
\begin{align*}
	P(Y_{n+1} \in A, R_{n+1}  \leq t &| Y_m = y_m, R_m = t_m, m \le n)\\
	& = P(Y_{n+1} \in A, R_{n+1} - R_n \leq t-t_n | Y_n = y_n).
\end{align*}
Hence, the process can be embedded in a Markov process on a higher dimensional state space.  This implies the process  \eqref{eq:semi-markov} is a particular case of a semi-Markov processes \citep{levy1954processus}. 

A characteristic feature of the process  \eqref{eq:semi-markov} and, in fact, of any semi-Markov process, is a set of intervals of constancy in their trajectory. Since this structure is similar to the SF-Harris process we can deduct a number of properties. For example,  the process  \eqref{eq:semi-markov} has non-independent increments, its mean and variance match the ones of the distribution $Q$, while its autocorrelation function is given by $r(t) = 1-G(t)$. We also have that, for any $A \in \mathcal{E}$,
 \begin{align*}
 	\mathds{P}_x(\xi_t \in A) = G(t) Q(A) +  \{1-G(t)\} \delta_x(A).
 \end{align*}
  Therefore, the process is wide-sense regenerative. It is immediate to see as well that $\mathds{P}_x(\xi_t \in A) \rightarrow Q(A)$ when $t \rightarrow \infty$. This implies $Q$ is the limit distribution. Other stability properties can be deduced easily, such as ergodicity, positive Harris recurrence, and representations  analogous to the ones we obtained for the SF-Harris process. The details appear in Appendix \ref{App:stability}. Such stability properties make the tractability of the model possible, and become important when working on the inference and application to real data sets. Later, we will use special cases of the process  \eqref{eq:semi-markov} to extend the memory of the SV model proposed in this paper.

\section{Estimation and prediction}\label{section:estimation}

\subsection{Estimation}\label{subsection:est-methods}

In this section we develop an efficient estimation method for the SF-Harris process based on a Gibbs sampler. Additionally, we provide other three methods for its comparison. We consider the one-dimensional case, although all methods can be extended in a natural way to distributions $Q$ in higher dimensions. 

We assume that $x_1,...,x_n$ is a discrete realization of the process at times $\hat{t}_0< \dots <\hat{t}_n$, we denote with $t_i = \hat{t}_i - \hat{t}_{i-1}$ for $i=1,...,n$, $t_0 = 0$, we let $\beta$ be the set of parameters of $Q$, and assign independent prior distributions $\pi(\alpha), \pi(\beta)$. Since $\alpha$ is a positive real number, we choose $\pi(\alpha) = \mathsf{Exp}(c) $ for some $c>0$, the choice of $\pi(\beta)$ will depend on $Q$. Consequently, the joint posterior distribution is
\begin{align*}
	\pi(\alpha, \beta | x_1,...,x_n) \propto \prod_{i=1}^n \left\{  (1-e^{-\alpha t_i}) Q(x_i|\beta)+e^{-\alpha t_i} \delta_{x_i}(x_{i-1}) \right\}e^{-\alpha c}\pi(\beta).
\end{align*}	
This distribution has $2^n$ terms. Rather than dealing with it, we work with the latent random variables $z_0,...,z_n$, where $z_i$ equals one if $x_i$ comes from $Q$, or zero otherwise. Denoting with $\textbf{x} = (x_1,...,x_n)$ and $\textbf{z} = (z_1,...,z_n)$, we have that 
\begin{align}\label{eq:post_beta}
\pi(\beta | \alpha, \textbf{x}, \textbf{z}) &\propto  \prod_{i=1}^n \left\{ (1-e^{-\alpha t_i}) Q(x_i|\beta)\right\}^{z_i}  \pi(\beta),
\end{align}
so the choice of $ \pi(\beta)$ can cause \eqref{eq:post_beta}  to have a tractable form. To simulate from the full conditional distribution of $\alpha$ we consider two options:
\begin{itemize} 
	\item[(a)] (Gibbs-a) We apply Adaptive Rejection Metropolis Sampling \citep{gilks1995adaptive} to simulate directly from the full conditional distribution of $\alpha$.
	\item[(b)] (Gibbs-b) We use an additional set of latent random variables. Let  $m = \sum_{i=0}^n z_i$, and $j_0,...,j_m$ be the ordered times where the variables $z_i$ equal one. Then $j_1-j_0,j_2-j_1,...,j_m-j_{m-1}$ are independent, identically distributed variables with distribution $\mathsf{Exp}(\alpha)$, so it turns out  $\alpha |(j_0,...,j_m) \sim \mathsf{Gamma}(m+1, j_m+c)$.
\end{itemize}
The two options of the Gibbs sampler, Gibbs-a and Gibbs-b, are compared through a simulation study to other three inference methods, the no difference no jump method (NDNJ), maximum likelihood estimation (MLE), and the expectation-maximization algorithm (EM). When performing the comparison, we use the posterior distributions modes as estimators. 

In the NDNJ method, we assume that when two consecutive observations are different, it occurs because the process jumped exactly at that moment, and when they are the same, it occurs because the process did not jump. Following this reasoning, we equate the expected value of $\alpha$ with its sample mean, and then estimate the parameters of $Q$ with the observations that are distinct. This method is likely to show estimation problems when $Q$ has a high probability of repeated values, or when few observations are available. 

The MLE consists in the numerical maximization of the  log-likelihood function.
\begin{align*}
	\log L(\alpha, \beta) &= \log Q(x_0|\beta) + \sum_{i=1}^n \log \left\{ (1-e^{-\alpha t_i}) Q(x_i|\beta) + e^{-\alpha t_i}  \delta_{x_i}(x_{i-1}) \right\},
\end{align*}

Finally, for the EM, we use again the latent random variables $z_0,...,z_n$. As a consequence, the augmented log-likelihood results in
\begin{align*}
	\log L^a(\alpha, \beta) &=  \sum \limits_{i=1}^n \left\{ z_i\log(1-e^{-\alpha t_i}) +(1-z_i)(-\alpha t_i) \right\} + \sum \limits_{i=0}^n z_i\log Q(x_i|\beta).
\end{align*}
Now $z_i|(x_i, \alpha, \beta) \sim \mathsf{Ber}(p_i)$ with
\begin{align*}
	p_i = \frac{(1-e^{-\alpha t_i}) Q(x_i|\beta)}{(1-e^{-\alpha t_i}) Q(x_i|\beta)+e^{-\alpha t_i} \delta_{x_i}(x_{i-1})}.
\end{align*}
Denoting by $p_i^{(k-1)}= \mathsf{E}_{z_i|x_i, \alpha^{(k-1)}, \beta^{(k-1)}}[z]$, at each step of the EM we maximize the function $ G\left\{\left(\alpha, \beta\right)| \left(\alpha^{(k-1)}, \beta^{(k-1)}\right)\right\}$, given by
\begin{align*}
	\sum \limits_{i=1}^n \left\{ p_i^{(k-1)}\log(1-e^{-\alpha t_i}) +(1-p_i^{(k-1)})(-\alpha t_i) \right\}  
	 + \sum \limits_{i=0}^n p_i^{(k-1)}\log Q(x_i|\beta),
\end{align*}
either numerically or analytically; depending on the form of $Q$.

\subsection{Simulation study}\label{subsection:est-simulation_study}

We present a simulation study to account for the performance of the estimation method. The main characteristics affecting the performance are the number of available observations, and the probability of repeated values in $Q$. To exemplify this, we vary the number of observations on the samples, and test all the methods with two marginal distributions. The first one is a discrete uniform on the values $1,2,3,4,5$, denoted by $\mathsf{U(1,2,3,4,5)}$. The second one is a generalized inverse Gaussian (GIG) \citep[see, e.g.,][]{eberlein2004generalized}, with density given by
\begin{align*}
		 \mathsf{GIG}(\lambda, \kappa, \eta) \propto x^{\lambda-1} \exp \left\{ -\frac{\kappa}{2}\left(\frac{\eta}{x} + \frac{x}{\eta}\right)\right\}, \ \text{for} \ x>0,
\end{align*}
where $\lambda \in \mathbb{R}$, $\kappa \geq 0$, $\eta > 0$, and $\kappa > 0$ when $\lambda = 0$.

Roughly, the testing  procedure comprises two steps:

\begin{itemize}
	\item[I.] We choose 100 parameter values randomly. This is done by simulating from a uniform distribution at reasonable intervals for the parameters.
	\item[II.] For the sample sizes $k$ = 20, 100, 500, and 1000:
	\begin{itemize}
	\item[i.] We simulate a process sample of length $k$ for each parameter value.
	\item[ii.] We estimate each parameter value with the four methods.
	\item[iii.] We calculate an estimation error for each method using the accuracy for the 100 parameters estimation.	
	\end{itemize}			
\end{itemize}

A testing procedure of this kind has a number of advantages. First, it proves the estimation is working for a wide range of parameter sets and for different combinations of them. Second, it assures that any user can run the method without problems since the starting conditions are computed at random. Third, it provides a highly accurate estimation error  because there is no subjectivity involved neither in which set of parameters is selected to display, nor in which starting point to use for each run.

\begin{table}
	\caption{\label{unif_errors}Estimation errors for the SF-Harris process  parameters in the case of $Q$ being $\mathsf{U(1,2,3,4,5)}$}
	\begin{center}
	\begin{tabular}{ c | c c c c c } 
		Sample size & $E_\alpha^{NDNJ}$& $E_\alpha^{MLE}$& $E_\alpha^{EM}$ & $E_\alpha^{Gibbs-a}$ & $E_\alpha^{Gibbs-b}$\\ [.4ex] 
		\hline\\[-3.5ex] 
		20 & 0.45 & 1.13 & 0.83 & 0.31 & 0.44\\
		100 & 0.40 & 0.11 & 0.12 & 0.13 & 0.39\\
		500 & 0.27 & 0.04 & 0.04 & 0.04 & 0.26\\
		1000 & 0.22 & 0.03 & 0.03 & 0.02 & 0.18\\
		\hline
\end{tabular}
\end{center}
\end{table}

We start the estimation with the $\mathsf{U(1,2,3,4,5)}$ case. Figure \ref{fig:unif} illustrates a 28-day trajectory. It is natural to expect that the NDNJ method and the Gibbs-b underestimate the value of $\alpha$. This is because the process is likely to jump and fall back to the previous state. However, in the NDNJ or Gibbs-b methods, if two subsequent observations are equal, we assumed that it occurs because the process did not jump. In step I, we draw the parameters $\alpha_1,...,\alpha_{100}$  from a uniform distribution over $(0,30)$. In step II, we compute the error as $E_\alpha = \frac{1}{100} \sum \limits_{i=1}^{100}\left| \frac{\alpha_i-\hat{\alpha}_i}{30} \right|$. The results are shown in Table \ref{unif_errors}. Gibbs-a outperforms all methods. 

\begin{figure}[p]
    \centering
	\includegraphics[scale = .9]{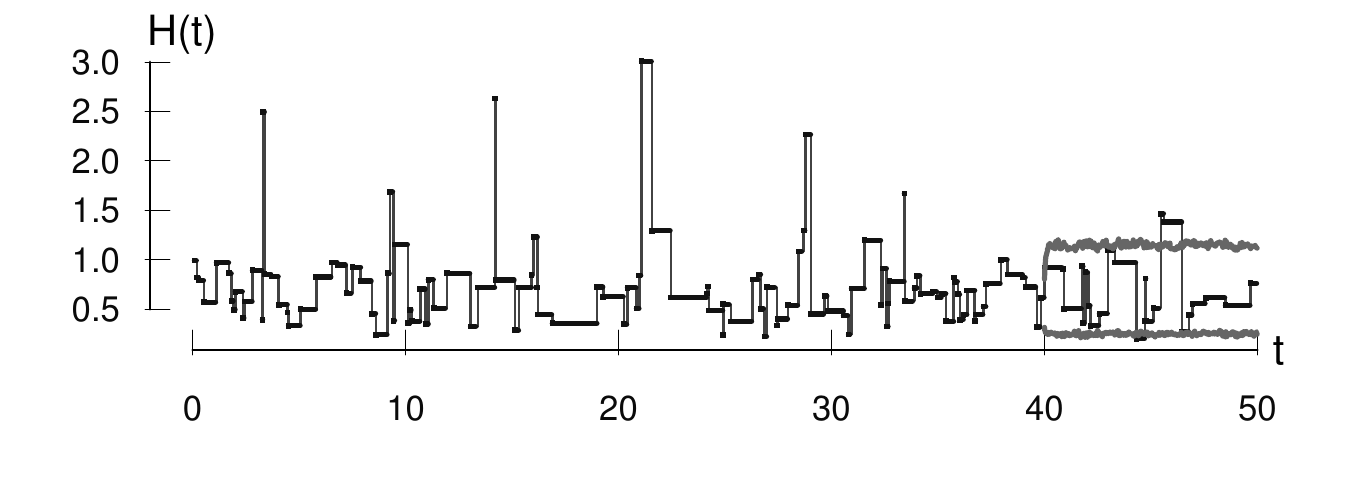}
    \caption{Trajectory of the SF-Harris process $(H(t))_{t \geq 0}$, where Q is a $\mathsf{GIG}(-2, 4, 1)$ and $\alpha = 3$; plus $0.9$-HPD intervals computed with the first 40 days.}
    \label{fig:GIG_pred}
\end{figure} 

We continue with the case of $Q$ being a $\mathsf{GIG}(\lambda, \kappa, \eta)$ distribution. To illustrate the process dynamics we simulate a 40-day trajectory, appearing in Figure \ref{fig:GIG_pred}. The estimation with the four methods may be developed further using the properties of the GIG family, this is located in Appendix \ref{App:estGIG}. We follow the testing procedure again. In step I, we simulate the values $\alpha_1,...,\alpha_{100}$, $\lambda_1,...,\lambda_{100}$, $\kappa_1,...,\kappa_{100}$, and $\eta_1,...,\eta_{100}$ from a uniform distribution over the intervals $(0,30)$, $(-5,5)$, $(0,50)$, and $(0,4)$ respectively. In step II, the errors are calculated as:
\begin{align*}
	E_\alpha = \frac{1}{100} \sum \limits_{i=1}^{100}\left| \frac{\alpha_i-\hat{\alpha}_i}{30} \right| \ \text{and} \	E_Q = \frac{1}{100} \sum \limits_{i=1}^{100} \mathsf{KL}\left\{\mathsf{GIG}(\lambda_i, \kappa_i, \eta_i), \mathsf{GIG}( \hat{\lambda}_i, \hat{\kappa}_i, \hat{\eta}_i)\right\},
\end{align*}
where $\mathsf{KL}$ denotes the Kullback Leibler divergence, computed  in Appendix \ref{App:KL}. The results are presented in Table \ref{errors_gig}. In this case, the NDNJ method provides the best results. The MLE, EM, and Gibbs-a methods are poor for the estimation of $\alpha$, while the Gibbs-b performs well for large samples. 	

\begin{table}
	\caption{ \label{errors_gig} Estimation errors for the SF-Harris process parameters in the case of $Q$ being $\mathsf{GIG}(\lambda, \kappa, \eta)$}
	\begin{center}
	\begin{tabular}{c | c c c c c }
		Sample size & $E_\alpha^{NDNJ}$& $E_\alpha^{MLE}$& $E_\alpha^{EM}$ & $E_\alpha^{Gibbs-a}$ & $E_\alpha^{Gibbs-b}$\\ [0.4ex] 
		\hline\\[-3.5ex] 
		20 & 0.45 & 1.72 & 1.64 & 0.28 & 0.45\\
		100 & 0.39 & 1.54 & 1.14 & 0.16 & 0.38\\
		500 & 0.23 & 1.55 & 1.13 & 0.91 & 0.21\\
		1000 & 0.16 & 1.51 & 1.11 & 1.41 & 0.14\\		[0.6ex] 
		\hline\\[-2.2ex]
		 & $E_Q^{NDNJ}$& $E_Q^{MLE}$& $E_Q^{EM}$ & $E_Q^{Gibbs-a}$ & $E_Q^{Gibbs-b}$\\ [0.4ex] 
		\hline\\[-3.5ex] 
		20 & 0.09 & 0.06 & 0.09 & 0.69 & 0.83\\
		100 & 0.02 & 0.02 & 0.02 & 0.28 & 0.30\\
		500 & 0.01 & 0.02 & 0.03 & 0.17 & 0.19\\
		1000 & 0.01 & 0.03 & 0.03 & 0.13 & 0.14 \\
		\hline
\end{tabular}
\end{center}
\end{table} 

Notice that an immediate procedure for obtaining a set of $m$ trajectories in future times is available by, first,  simulating $m$ parameter values from the posterior distributions, and, second, for each one of the $m$ values, simulating a realization of $H$ starting on $x_n$ at time $t_n$. Given a SF-Harris process trajectory in the case where $Q$ is $\mathsf{U(1,2,3,4,5)}$, we draw a large number of realizations in future times employing the Gibbs-a method, obtaining a set from which we can make predictions. We could compute highest posterior density intervals (HPD intervals), but that would not be truly meaningful since, after the process jumps, it will randomly fall in some state $1,2,3,4,5$. Hence, the most interesting aspect is when the next jump will occur, or the probability of the first jump being before certain time. Indeed, the mean value of the first jump time in the Figure \ref{fig:unif} trajectory is $28.60$, and the probability of the first jump being before one day is $0.79$. In the GIG density case, it does make sense to apply the prediction procedure to find HPD intervals. As an example, we took away the last ten days from the trajectory in Figure \ref{fig:GIG_pred} and performed this; simulating 1000 realizations using the Gibbs-b method and calculating with them HPD intervals. 

\section{Stochastic volatility model}\label{section:SV_model}

\subsection{Definition and relevant properties}

In this section we derive important properties of the proposed SV model, in which log-prices follow \eqref{SVmod}, being $\tau$  the SF-Harris process and $Q$ a distribution with positive support.  Such a model shares several attractive properties with the BNS model. In the BNS model, the spot volatility follows a stationary process of the OU-type; with the restriction that the background driving Lévy process (BDLP) has positive increments and no drift. This implies that the spot volatility jumps when the BDLP does, and decays exponentially in-between. Therefore, in both models the spot volatility is a stationary and ergodic process, with positive jumps, and with an exponential autocorrelation function. 

Moreover, in the BNS model, given a self-decomposable distribution, there is a unique BDLP that will generate that specific marginal distribution for the volatility. Hence, when the marginal distribution Q in the SF-Harris process is self-decomposable, we can choose a BDLP in a way that the OU-type and the SF-Harris process share also the mean and variance. As a consequence, the integrated volatility and the return processes will also be equivalent up to second order. Many of the powerful results developed for estimating and forecasting returns rely only on second moments so they can be equally applied to the SF-Harris process. Examples of these results may be found in \citet{barndorff2001non} or \citet{sorensen2000prediction}.

To implement the proposed SV model, we will restrict our attention to the subclass of models in which the spot volatility has a marginal GIG distribution.

\begin{definition} 
	We term the \textnormal{GIG-Harris SV model} to \eqref{SVmod}, where $\mu$ and $\beta$ are constants, $B=(B_t)_{t \geq 0}$ is a Brownian motion independent of $\tau =(\tau_t)_{t \geq 0} $, and $\tau$ is the SF-Harris process in the case of $Q$ being a GIG distribution.
\end{definition}

Although any positive marginal distribution $Q$ defines a spot volatility model, we choose GIG distributions for various reasons. First, special cases have been extensively used, such as the Gamma, Inverse Gamma, or Inverse Gaussian. Second, they have been proven to accurately fit real data \citep[see, e.g.,][]{gander2007stochastic}. Third, they can adjust to different scenarios since the choice of parameters will change their shape, skewness, and tail weight.

\subsection{Estimation and prediction}\label{subsection:SV_model-estimation}

In this section we develop methods to sample from the posterior distributions of the parameters of the GIG-Harris SV model; based in the estimation method we developed and tested for the SF-Harris process in Section \ref{section:estimation}. Suppose we observe a high-frequency, discretized realization of the log-price process $Y$ at times $t_1<\dots<t_n$. Then, the model is adjusted to returns, defined as $ R_i = Y_{t_i}-Y_{t_{i-1}}$, for $i = 1,...,n$, where we assume $Y_{t_0} = 0$. Given such returns, a series of filters are required to obtain a set of observations of the SF-Harris process. These are summarized in the following diagram.

\vspace{3mm}
\begin{center}
\begin{small}
\begin{tikzpicture}[>=latex']
    
          \tikzset{block/.style= {draw, rectangle, align=center,minimum width=2cm,minimum height=1cm},
        rblock/.style={draw, shape=rectangle,rounded corners=.8em,align=center,minimum width=3.1cm,minimum height=2cm},
        }

        \node [rblock]  (uno) {Given the returns\\ $(R_{t_i})_{i = 1}^n$};
        \node [block, right =.5cm of uno] (dos) {We filter to measure the \\ process of  integrated \\ volatility  $(H^*_{t_i})_{i = 1}^n$};
        \node [block, right =.5cm of dos] (tres) {We filter to measure the \\ process of spot \\ volatility  $(H_{t_i})_{i = 1}^n$};
        \node [block, below =.5cm of tres] (cuatro) {We estimate $\alpha$ and  the \\ parameters of Q: $\lambda,\kappa, \eta$};
        \node [block, below =.5cm of dos] (cinco) {We estimate \\ $\mu$ y $\beta$};
 
        \path[draw,->] (uno) edge (dos)
                    (dos) edge (tres)
                    (tres) edge (cuatro)
         			 (dos) edge (cinco)
                    ;
\end{tikzpicture}
\end{small}
\end{center}
\vspace{.5mm}

The measurement of the integrated volatility process is performed using a common procedure based on the semimartingales quadratic variation. The procedure, found in \citet{barndorff2004measuring}, consists in approximating the quadratic variation of $Y$ with the realized variance. Indeed, the quadratic variation of the GIG-Harris SV model matches the Brownian motion subordinator $H^*$.  This method has proven to work well for irregular time intervals, and when the subordinator is continuous, as is the integrated SF-Harris process case. Now, given the integrated volatility observations, we perform the measurement of the spot volatility process by applying the common right-hand side derivative approximation.
  
In the literature, there are multiple alternatives to filter both the integrated and spot volatility. In order to choose the best method to apply, we tried out a number of these alternatives, comparing them by computational time, and by the estimation errors obtained, which are computed with 100 simulated trajectories and add up the errors of all of the estimation steps (see Section \ref{section:simulation_study}).  The errors show that the path properties, such as the jump rate and the values it ranges in, are well preserved with these simple filtering methods. If, however, we are interested in using the filtration for other purpose than the estimation of the SF-Harris process parameters, other methods could be tried out. For instance, jump robust methods for filtering univariate integrated and spot volatility are the bipower realized variation, which is used later on for the jump detection procedure, and the nonparametric kernel methods of \cite{bandi2009nonparametric}. 

Having the spot volatility observations, we proceed using such an approximation to estimate $\alpha$ and the parameters of $Q$, by means of the Gibbs-b method of Section \ref{subsection:est-methods}. Finally, for the estimation of $\mu$ and $\beta$, we have that  
\begin{align*}
	 R_i \sim N\{\mu(t_i-t_{i-1}) + \beta H_i^*, H_i^*\},
\end{align*} 
where $H_i^* = H^*_{t_i}-H^*_{t_{i-1}}$. Therefore, assigning a Gaussian prior distribution for $(\mu, \beta)$, we obtain a Gaussian posterior distribution with certain updated parameters, detailed in Appendix \ref{App:estimation_mu_beta}.

\subsection{Simulation study} \label{section:simulation_study}

To validate the estimation procedure we perform a simulation study. The testing procedure we employ is similar to the one presented in Section \ref{subsection:est-simulation_study}. We begin by drawing 100 values for $\mu$, $\beta$, $\alpha$, $\lambda$, $\kappa$, and $\eta$ from a uniform distribution over the intervals $(-2,2)$, $(-2,2)$, $(0,30)$, $(-5,5)$, $(0,50)$, and $(0,4)$ respectively. With these parameter values, we proceed with the simulation of 100 series of spot volatility, integrated volatility, and returns. Subsequently, we run all of the estimation steps, and calculate the estimation errors as: 
\begin{align*}
	E_\mu &= \frac{1}{100} \sum \limits_{i=1}^{100}\left| \frac{\mu_i-\hat{\mu}_i}{4} \right|, \ 	
	E_\beta = \frac{1}{100} \sum \limits_{i=1}^{100}\left| \frac{\beta_i-\hat{\beta}_i}{4} \right|, \ 	
	E_\alpha = \frac{1}{100} \sum \limits_{i=1}^{100}\left| \frac{\alpha_i-\hat{\alpha}_i}{30} \right|, \ 	\text{and} \\
	E_Q &= \frac{1}{100} \sum \limits_{i=1}^{100} \mathsf{KL}\left\{\mathsf{GIG}(\lambda_i, \kappa_i, \eta_i), \mathsf{GIG}( \hat{\lambda}_i, \hat{\kappa}_i, \hat{\eta}_i)\right\}.
\end{align*}

The resulting errors are $E_\mu = 0.18$, $E_\beta = 0.10$, $E_\alpha = 0.22$, and $E_Q = 0.20$. The first three are easily interpretable since they are weighted by each interval length, for the interpretation of the error of $Q$ see Appendix \ref{App:KL}. The approximation proves to be quite good, even tough we are using again a rigorous estimation test.

\section{Empirical Analysis}\label{section:empirical_analysis}

\subsection{Data}

In this section, we apply the GIG-Harris SV model to the stock prices of IBM (International Business Machines). Let us start with an explanation of the data.  The three-year series we use, obtained from \cite{Kibot}, covers the period from January 2012 until December 2014. It records at every minute the open, high, low, and close prices, and the volume of IBM stocks; in the regular times of the US trading session, between 9:30 AM and 4:00 PM on workdays. The data are provided adjusted, using appropriate split and dividend multipliers adhering to the Center for Research in Security Prices standards. The procedure to clean the data is detailed  step by step in Appendix \ref{App:cleaning_procedure}.

Performing an initial exploratory analysis we can notice that, as is common in practice, the estimation accuracy should be better when introducing a jump and a periodic component to the model. For this reason, we continue as follows. First, we introduce a jump component to the returns and approximate the integrated and spot volatility processes. Next, we add a periodic component to the spot volatility. Taking into account both components, we then proceed with full estimation of the process, and, last, we conclude by testing such an estimation.

\subsection*{Adding a jump component}

Figure \ref{fig:prices_ret} displays the trajectories of the log-average price and the returns. Sudden changes in the price level can be observed, which result in extremely large or extremely small returns, compared to the rest of them. To reflect this behavior, a customary practice is to model log-prices $Y= (Y_t)_{t \geq 0}$ generalizing the semimartingale \eqref{SVmod} by adding a finite activity jump process  $J = (J_t)_{t \geq 0}$, where $J_t = \sum_{j=1}^{N_t} C_j$. The jump process $J$ is assumed to be independent of $B$ and $\tau$. The process $N = (N_t)_{t \geq 0}$ counts the number of jumps that have occurred in the interval $[0, t ]$; and $C= (C_t)_{t \geq 0}$ is a process such that for all $t$, (i) $C_t<\infty$, and (ii) $\sum_{j=1}^{N_t} C_j^2 < \infty$. These properties ensure the quadratic variation of $Y$ is finite. 
For reviews on this generalization see, for instance, \citet{andersen2007roughing}. 

The change in the model affects the measurement of the integrated volatility process, as the realized variance approximates the quadratic variation of $Y$; given by $\tau^*_t+\sum_{j=1}^{N_t} C_j^2$. To deal with this, we first implement a jump-detection procedure, and then perform the measurement of the integrated volatility without taking the jumps into consideration. When deleting the jumps from the sample, the remaining data can be modeled with the GIG-Harris SV model, where the quadratic variation matches the integrated SF-Harris process. Since realized variance approximates quadratic variation also when the observations are not equally spaced, there is no need to replace the jump entries, the crucial part is to detect them.  

Many options to detect jumps are available, we explored a few of them, finding the best results were obtained when using \citet{barndorff2004power} bipower variation, with which we detected around 90 percent of the jumps. The details of the performed procedure can be found in Appendix \ref{App:jump_detection}. After deleting the jumps, we proceeded with the measurement of the integrated and spot volatility processes, $H^*$ and $H$, as described in Section \ref{subsection:SV_model-estimation}. In doing so, the realized variance was computed based on fifteen minutes returns. 

\subsection*{Adding a periodic component}

\begin{figure}[p]
    \centering
	\includegraphics[scale = 0.9]{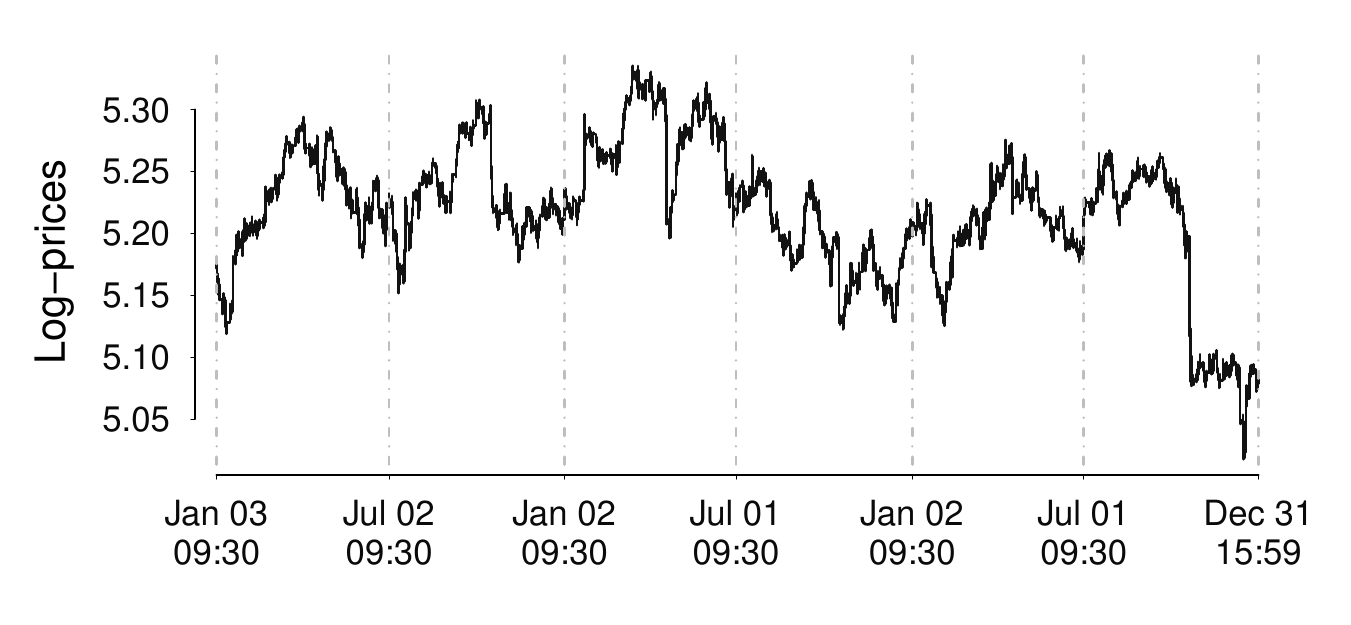}  
	\includegraphics[scale = 0.9]{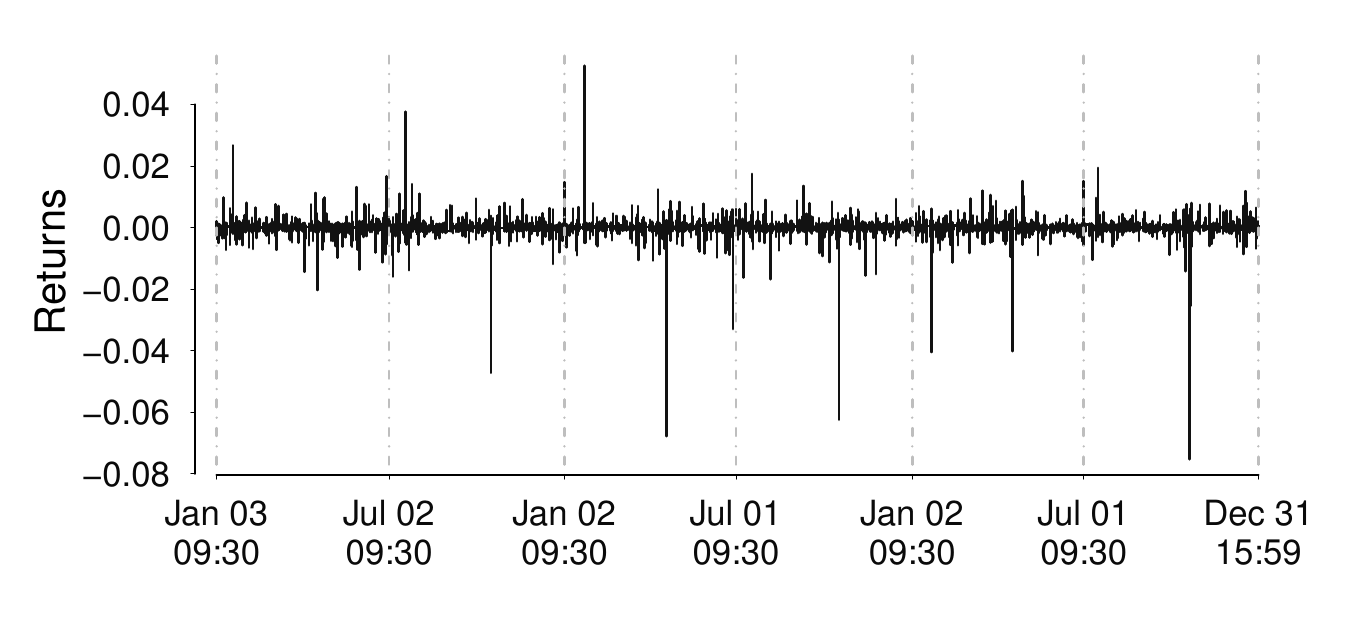}  
	\caption{Log-prices and returns for IBM from 2012 to 2014.}
    \label{fig:prices_ret}
\end{figure} 

\begin{figure}[p]
    \centering
	\includegraphics[scale = 0.9]{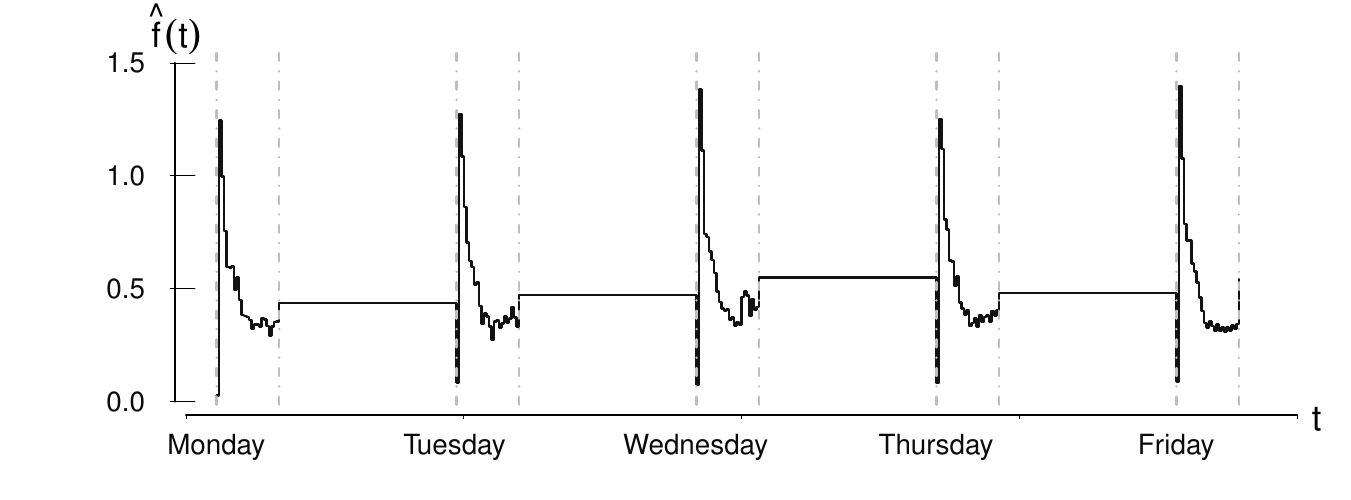}  
	\caption{Approximation of the periodicity function $\hat{f}(t)$ of IBM stock prices.}
    \label{fig:f_week}
\end{figure} 

Recurring events, such as opening, lunch, and closing of financial markets, cause the return volatility to vary systematically over the trading days and weeks. Taking into account this periodic structure may improve the volatility modeling. We present a general procedure to extract the periodic component of a spot volatility process $\tau$; and prove it works for the special case of the SF-Harris process. 

The main idea is based on \citet{boudt2011robust}. Following them, we make a partition of the data time interval in smaller time intervals of length $d$, called local windows; and consider the time transformation $c(s)$ indicating the position of $s$ in the periodicity cycle. So, accordingly, $c(s) = s\bmod L$, where the cycle repeats itself every $L$ days.  Next, we define the periodicity function $f:[0,d] \rightarrow \mathbb{R}^+$ as
\begin{align}\label{Eq:f}
	f(t) = \expect*{\frac{\tau_t}{\frac{1}{d}\int_0^d \tau_s ds}},
\end{align}
and the periodicity factor for each time $t$ as $f\{c(t)\}$. Finally, the process $\hat{\tau}_t = \dfrac{\tau_t}{f\{c(t)\}}$ is called the \textit{periodically adjusted volatility}.  

It is worth noting two important properties. First, when $\tau$ is a stationary process, it suffices to define the function $f$ in the first local window, $[0, d]$, and then extend it to the rest of local windows. Second,  by definition, $\frac{1}{d}\int_0^d f(s)ds = 1. $

To clarify the meaning of $f$ let us think of local windows as days. Then $f$ is the expected value of the spot volatility $\tau$, divided by the average daily volatility $\frac{1}{d}\int_0^d \tau_s ds$. That is, when removing the daily volatility, we would expect that the changes in the remainder volatility were due to the periodicity. This is evident considering $\tau_t$ as a periodic factor depending on $t$ multiplied by a constant effect of the volatility within each day; meaning that the spot volatility, after filtering out the periodicity, is approximately constant over each day. That is the main premise in the studies of \citet{andersen1997intraday}; and \citet{boudt2011robust}; and, with such a premise, is direct to verify that the equality \eqref{Eq:f} is fulfilled.

In the case of the GIG-Harris SV model, the spot volatility after filtering out the periodicity is the SF-Harris process. This implies that, instead of assuming the periodically adjusted volatility constant within each day, we are thinking  of it as a piecewise constant process. Because of Theorem \ref{prop:explicit_form}, taking $\epsilon = 0$, the following proposition ensures equality \eqref{Eq:f} holds in this case. 

\begin{proposition}\label{prop:periodic_comp}
Let $(Y_n)_{n \in \mathbb{N}}$ be a sequence of positive, finite mean, independent identically distributed random variables, stochastically independent of a Poisson process $N = (N_t)_{t \geq 0}$. Let  $d>0$, and $f:[0,d] \rightarrow \mathbb{R}^+$ be a measurable function such that $\frac{1}{d}\int_0^d f(s)ds = 1$. Then, for $t \in [0,d]$, it follows that
\begin{align*}
	\expect*{\frac{f(t)Y_{N_t}}{\frac{1}{d}\int_0^d f(s)Y_{N_s} ds}} = f(t).
\end{align*}
\end{proposition}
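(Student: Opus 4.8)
The plan is to turn the statement into an identity about finitely many i.i.d.\ random variables and then close it with a symmetry argument of inspection‑paradox type. Since $f(t)$ is deterministic it may be pulled out of the expectation, so the claim is equivalent to $\expect*{\frac{Y_{N_t}}{\frac{1}{d}\int_0^d f(s)Y_{N_s}\,ds}}=1$. By Theorem~\ref{prop:explicit_form} with $\epsilon=0$ (equivalently Corollary~\ref{corol:explicit_form}) the sample path $s\mapsto Y_{N_s}$ is piecewise constant on $[0,d]$: writing $0=T_0<T_1<\dots<T_{N_d}\le d$ for the Poisson epochs lying in $[0,d]$ and $I_0,\dots,I_{N_d}$ for the induced intervals of constancy, one has $Y_{N_s}\equiv Y_j$ on $I_j$. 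Putting $w_j:=\int_{I_j}f(s)\,ds$, so that $\sum_j w_j=\int_0^d f(s)\,ds=d$, the denominator becomes $\frac{1}{d}\sum_{j=0}^{N_d}w_jY_j$ and, with $a:=N_t$, the target identity reads $\expect*{\frac{d\,Y_a}{\sum_j w_jY_j}}=1$.

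Next I would condition on the Poisson process $N$: conditionally on $N$ the weights $w_0,\dots,w_{N_d}$ and the index $a$ are frozen while $Y_0,Y_1,\dots$ remain i.i.d.\ copies of $Q$ that are independent of them, so the task reduces to evaluating the conditional expectation of $d\,Y_a/\sum_j w_jY_j$ given $N$ and then averaging over the law of $(N_d,w_\bullet,a)$ induced by $N$. A convenient device for the averaging is to introduce an auxiliary random time $U$ with density $f/d$ on $[0,d]$, independent of $(Y,N)$; then $\frac{1}{d}\int_0^d f(s)Y_{N_s}\,ds=\expect*{Y_{N_U}\mid Y,N}=:D$, and the tower property yields $\expect*{Y_{N_U}/D}=\expect*{D/D}=1$ at once. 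What is then left -- and this is the heart of the matter -- is to show that replacing the $f$-weighted \emph{random} time $U$ by the \emph{fixed} time $t$ does not change the value, i.e.\ that $\expect*{Y_{N_t}/D}=\expect*{Y_{N_U}/D}$; equivalently, that the constancy interval straddling a deterministic point has the same joint law with $D$ as the constancy interval selected by an $f$-weighted random point. This is an inspection‑paradox statement, which ought to be forced by the lack‑of‑memory of the exponential holding times together with the normalisation $\frac{1}{d}\int_0^d f(s)\,ds=1$.

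I expect this last step to be the main obstacle. The identity cannot hold interval by interval, because for a fixed weight vector the conditional expectation $\expect*{d\,Y_a/\sum_j w_jY_j}$ is genuinely sensitive to which $w_a$ sits in the numerator -- a short Laplace‑transform computation, writing $1/\sum_j w_jY_j=\int_0^\infty e^{-r\sum_j w_jY_j}\,dr$ and taking expectations factorwise over the i.i.d.\ $Y_j$, shows it differs from $1$ unless $Q$ is degenerate -- so the cancellation only materialises after integrating over the jump configuration, and it must be handled with care near the window endpoints $0$ and $d$, where the straddling interval is truncated by $[0,d]$. Two routes look workable. The probabilistic route: prove the identity first for the stationary two‑sided version of $s\mapsto Y_{N_s}$, where the interval around any interior point is an honest sum of two independent exponential spreads and so matches the $f$-size‑biased interval exactly, and then verify that the boundary truncation does not affect the expectation, using that $Y$ is independent of $N$ and $\int_0^d f(s)\,ds=d$. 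The computational route: carry the Laplace‑transform expansion to its conclusion, perform the $r$-integral and sum over configurations, thereby reducing the whole claim to elementary calculus with the exponential densities of the holding times.
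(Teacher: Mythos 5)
Your reduction is fine as far as it goes: pulling out $f(t)$, conditioning on $N$ so that the denominator becomes $\frac{1}{d}\sum_j w_j Y_j$ with $\sum_j w_j=d$, and the auxiliary-time computation $\expect*{Y_{N_U}/D}=1$ for $U$ with density $f/d$ are all correct. But the step you yourself flag as the heart of the matter --- that $\expect*{Y_{N_t}/D}=\expect*{Y_{N_U}/D}$, i.e.\ that the discrepancy you observe for a fixed jump configuration cancels once one averages over configurations --- is not only unproven, it is false, so neither of your two proposed routes can close it. Take $f\equiv 1$, $d=1$, $t=0$, and $Y_j$ i.i.d.\ $\mathsf{Exp}(1)$. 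Conditionally on exactly one Poisson point in $[0,1]$, located at a uniform time $u$ (which is the average over configurations at first order in the jump rate $\alpha$),
\begin{align*}
	\expect*{\frac{Y_{N_0}}{\int_0^1 Y_{N_s}\,ds} | N_1=1} \;=\; \int_0^1 \expect*{\frac{Y_0}{u\,Y_0+(1-u)\,Y_1}}\,du \;=\; \expect*{\frac{Y_0(\ln Y_0-\ln Y_1)}{Y_0-Y_1}} \;=\; \frac{\pi^2}{8}\neq 1,
\end{align*}
so $\expect*{Y_{N_0}/\int_0^1 Y_{N_s}\,ds}=1+(\pi^2/8-1)\,\alpha+O(\alpha^2)$. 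Nor is this a boundary artefact of the window $[0,d]$: at the interior point $t=1/2$ the analogous first-order coefficient is $\ln 2+\pi^2/8-2\approx -0.07\neq 0$. What survives is exactly the statement your auxiliary $U$ delivers, namely the identity integrated in $t$ against $f(t)\,dt/d$ over the cycle; pointwise in $t$ it fails for non-degenerate $Q$.

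It is worth spelling out how this relates to the paper's own argument, because your Laplace-transform remark is sharper than you may realize. The paper conditions on $\mathcal{F}_d=\sigma(N_s:s\le d)$ and invokes Lemma \ref{lemma1}, which asserts precisely the configuration-wise identity you refute: $\expect*{X_j/\sum_i a_iX_i}=1/\sum_i a_i$ for i.i.d.\ $X_i$ and arbitrary weights. That lemma's proof silently assumes the left side does not depend on $j$, which is wrong when the weights are unequal: for $X_1,X_2$ i.i.d.\ $\mathsf{Exp}(1)$ and $(a_1,a_2)=(1,2)$ one gets $\expect*{X_1/(X_1+2X_2)}=2\ln 2-1\approx 0.39$ and $\expect*{X_2/(X_1+2X_2)}=1-\ln 2\approx 0.31$, neither equal to $1/3$. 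So the obstruction you identified does not disappear after averaging over the jump configuration (your hoped-for inspection-paradox cancellation), and it is also not circumvented by the paper's conditioning argument; the only valid version of Proposition \ref{prop:periodic_comp} is the $f$-weighted average over the periodicity cycle (or the trivial cases of degenerate $Q$ or no jumps in the window), which is what the normalization $\frac{1}{d}\int_0^d f(s)\,ds=1$ actually buys.
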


The estimation for the periodic component is based on a modification of the sample mean, and is fully developed in Appendix \ref{App:periodicity_estimation}. We set $d$ to one day and $L$ to five, meaning the cycle repeats itself every week of five trading days, as is common in practice. The resulting periodic component graphic appears in Figure \ref{fig:f_week}. 

After this, we continued with the measurement of the periodically adjusted volatility, to which we adjusted the SF-Harris process.  In view of such a process having piece-wise constant paths; when two observations differed by less than a small value $\epsilon$ (we fixed $\epsilon \approx 10^{-5}$), we considered this as measurement noise, and assumed them to be equal to their average.

\subsection{Results}\label{sec:Results}

To test the modeling procedure we broke the sample into an estimation period (first 80 percent of the data), and a subsequent forecasting period (last 20 percent of the data). Next, we predicted probability intervals for the forecasting period with 1000 simulated trajectories. Repeating this for different values of interval probabilities, we computed Table \ref{table:intervals}. If the procedure works correctly, the percentage of the original trajectory that falls within the prediction intervals should be similar to the interval probability. Therefore, the empirical results suggest the GIG-Harris SV model, including a jump and a periodic component, is useful for forecasting. 

\begin{table} 
\caption{\label{table:intervals} Percentage of the trajectory of IBM stock prices that falls within highest posterior density intervals of probability $p$.} 
\begin{center}
	\begin{tabular}{c | c  c  c  c  c  c  c}  
             $p$ & 0.25 & 0.50 & 0.75 & 0.85 & 0.90 & 0.95  \\
             	\hline\\[-3.5ex]
             \% & 25 & 51 & 75 & 84 & 89 & 93 \\
          \hline     
\end{tabular}   
\end{center}
\end{table} 

\section{Concluding remarks and future research directions}\label{section:concluding}

We showed the SF-Harris process is tractable, developing an estimation method and testing it trough rigorous tests. Furthermore, we proved the process provides a high degree of modeling flexibility, coming from the users control over the choice of the processes stationary measure. Next, we proposed a stochastic volatility model based on such a process.

The SV model was developed for the GIG family,  a broad family of marginal distributions. The GIG-Harris SV model shares attractive properties with several popular SV models, it generalizes a model often applied when considering periodicity, yet it has a simple transition mechanism driving its dependence. This allows to understand the model, estimate it, and apply it easily. We applied it to IBM data, proving its predictive strength. 

There are a number of available generalizations for the GIG-Harris SV model to explore. Along the work we mentioned considering the microstructure noise, and exploring jump robust estimation methods for the periodic component. In addition, there is empirical evidence suggesting that, in certain data sets, the dependence on the volatility structure decays at a hyperbolic rate for shorter lags; which is much slower than the exponential time-decay \citep[see, e.g.,][]{andersen1997heterogeneous}. Therefore, an important condition of a short-memory model is that it admits a possible extension to long-memory. 

An extension that has been widely applied is to alter the volatility process by using superpositions. Generalizations of this kind are developed in \citet{barndorff2003integrated}, where the BNS model is extended employing a weighted sum of independent OU-type processes. A first case is to consider a finite number of independent processes.  A new process $\sigma^2$ is defined as a superposition of OU-type processes, $\sigma_1^2,...,\sigma_m^2$, with different persistence rates $\lambda_1,...,\lambda_m$, and independent BDLP $Z_1,...,Z_m$ through
\begin{align} \label{eq:supOU}
	\sigma^2(t) = \sum_{j=1}^m w_j \sigma_j^2(t),
\end{align}
where the positive weights $w_1,...,w_m$ add up to one. The autocorrelation function
\begin{align}\label{eq:corr_sum_exp}
	r(t) = \sum_{j=1}^m w_j e^{-\lambda_j t}
\end{align}
is a weighted sum of exponentials. Thus, some of the volatility components may represent short term variations, while others represent long term movements. Many results from the case $m=1$ carry over to the superposition process. In particular, the integrated process can be derived explicitly. 

In general, it is difficult to find a suitable value of $m$ in equation \eqref{eq:supOU}. An alternative, given by \cite{griffin2010inference}, is to assume an infinite number of components of which only a finite number have non-negligible weight. Now, while this aggregation mechanism provides a possible explanation of the long-range dependence in a time series, the models are still of short-memory. Formally, we shall say that a stationary stochastic process exhibits long-memory if its autocorrelation function has an asymptotic power-like behavior.

A second option to extend the BNS model that may lead to long-memory is to use an infinite number of independent OU-type process. A new process $\sigma^2$ is defined as a superposition of OU-type processes, $\sigma_\eta^2$, with different persistence rates $\lambda_\eta$, and independent BDLP $Z_\eta$ through
\begin{align*}
	\sigma^2(t) = \int_{\mathcal{R}}  \sigma_\eta^2(t) F(d\eta)
\end{align*}
where $F$ is a probability distribution over $\mathcal{R} \subset \mathbb{R^{+}}.$ For example, when $F$ is a $\mathsf{Ga}\{2(1-H), 1\}$ law, the autocorrelation function is given by
\begin{align}\label{eq:corr_gamma}
	r(t) = \left(1+ t \right)^{-2(1-H)}
\end{align}
with $H \in (\frac{1}{2},1)$ being the long memory parameter. 

Aiming for similar extension results with the SF-Harris process, we define the \textit{mixture SF-Harris process} $\chi = (\chi_t)_{t \geq 0}$, as a process given by \eqref{eq:semi-markov}, where $Y$ is as in Theorem \ref{prop:explicit_form}, $(R_n)_{n \in \mathbb{N}}$ is a sequence of random variables independent of $Y$, and the increments $(V_n)_{n \in \mathbb{N}}$, $V_n = R_n-R_{n-1}$, satisfy $V_n \sim \mathsf{Exp}(\rho_n)$, with  $\{\rho_n\}_{n \in \mathbb{N}}$ independent identically distributed random variables with some distribution $F$.

It is direct to verify that 
\begin{align*}
	\mathds{P}_x(\chi_t \in A) = \int_{0}^\infty \{(1-e^{-\rho_1 t})Q(A) + e^{-\rho_1 t} \delta_x(A) \} F(d {\rho_1}) .
\end{align*}
Hence, the transition functions of the mixture SF-Harris process are a mixture of the SF-Harris process transitions. Now, the autocorrelation function of the  mixture SF-Harris process matches the generating function of $F$ since
\begin{align*}
	r(t) = \mathds{P}(V_1>t) &= \int_{0}^\infty e^{-t\rho_1} F(d\rho_1) .
\end{align*}
Therefore,  subject to the dependence structure in a time series, an adequate function $F$ can be chosen. For instance, when $F$ is a degenerated distribution on the value $\alpha$, the mixture SF-Harris process reduces to the SF-Harris process case, where $r(t) = e^{-\alpha t}$. When $F$ is a discrete distribution that takes the value $\lambda_j$ with probability $w_j$ for $j=1,...,m$, we recover the autocorrelation \eqref{eq:corr_sum_exp} of the finite superposition of OU-type processes. Even more, if $F$ is a discrete distribution that may take a countably infinite number of values we get the extension of \cite{griffin2010inference}. Last, when $F$ is a  $\mathsf{Ga}\{2(1-H), t+1\}$, then the autocorrelation function is given by \eqref{eq:corr_gamma}, so it matches the autocorrelation of the infinite superposition of OU-type processes.

More general long-memory models can be obtained by using any heavy-tailed distribution $F$, and, as we saw in Section \ref{section:semi}, a number of stability properties fulfilled by the SF-Harris process, such as wide-sense regeneration, ergodicity, and positive Harris recurrence, follow directly for the general model.  Also, the integrated process can be derived explicitly. However, the process is neither uniformly nor exponentially ergodic. This makes sense since observations far away in the past remain correlated.

Lastly, notice that this study dealt with the one-dimensional case. However, the SV model definition, and its respective research, are extended to multidimensional cases in a natural way. 

\if 0\blind
{
\section*{Acknowledgments}
We gratefully acknowledge the National Council for Science and Technology of Mexico (CONACyT). The first author was supported by a national CONACyT scholarship, while the second author was supported by the CONACyT grant number 241195.
} \fi

\bigskip
\appendix

 \section{SF-Harris process}\label{App:Harris_process}
 
 \subsection{Stability properties of the semi-Markovian extension}\label{App:stability}
 
We mentioned the process $\xi = (\xi_t)_{t \geq 0}$ given by \eqref{eq:semi-markov} has a set of intervals of constancy in their trajectory and, since this structure is similar to the SF-Harris process, we can deduce a number of properties. In particular, that $Q$ is the limit distribution. Furthermore, 
	\begin{align*}
	\sup_{A \in \mathcal{E}} |\mathds{P}_x(\xi_t \in A)-Q(A)| 
	&= \{1-G(t)\}  \sup_{A \in \mathcal{E}} |\delta_x(A)-Q(A)| \leq 1-G(t).
\end{align*}
Hence, 
	\begin{equation*}
		\lim_{t \rightarrow \infty} \left( \sup_{A \in \mathcal{E}} |\mathds{P}_x(\xi_t \in A)- Q(A)|\right)  = 0
	\end{equation*}
and the process is always ergodic.  If $G$ is a light-tailed distribution, meaning its tail is exponentially bounded, then the process is uniformly ergodic.

	 When $Q(B)>0$, it is easy to verify that
 \begin{equation*}
	 \mathds{P}_x\left\{ \int_0 ^\infty \delta_{\xi_t}(B)dt = \infty\right\} \geq \mathds{P}_x\left\{  \sum\limits_{n=1}^\infty V_n \delta_{Y_n}(B) = \infty\right\} = 1.
\end{equation*}
Therefore, the process  \eqref{eq:semi-markov} is positive Harris recurrent. Additionally, we have the following results, whose proofs are analogous to the corresponding SF-Harris process case.

\begin{theorem}\label{rep_semi-Markov}
Let $\xi = (\xi_t)_{t \geq 0}$ be as in \eqref{eq:semi-markov}.  If $K_t = \max \{n \in \mathbb{N}: R_n \leq t\}$,  and $K_0 = 0$, then $K =  (K_t)_{t \geq 0}$ is stochastically independent of $Y$ and $\xi_t = Y_{K_t}$ for all $t \geq 0$.
\end{theorem}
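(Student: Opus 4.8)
The plan is to prove the identity $\xi_t = Y_{K_t}$ pathwise, by locating $t$ in the partition of $[0,\infty)$ generated by the renewal times $(R_n)$, and then to read off the independence directly from the construction. Throughout I would adopt the same conventions already used in Corollary~\ref{corol:explicit_form}, namely $R_0 = 0$ and $Y_0 = x$; these are consistent with $K_0 = 0$ and with $\xi_0 = x$.

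First I would check that $K_t$ is a well-defined, finite $\mathbb{N}$-valued random variable for every $t \ge 0$. Since the increments $V_n = R_n - R_{n-1}$ are independent, identically distributed, and strictly positive almost surely, the partial sums $R_n = \sum_{k=1}^n V_k$ diverge to $+\infty$ almost surely --- a sum of i.i.d.\ strictly positive terms always does, with no moment assumption needed. On this almost sure event the set $\{n \in \mathbb{N} : R_n \le t\}$ is finite and nonempty (it contains $n = 0$), so $K_t = \max\{n : R_n \le t\}$ makes sense; moreover $(R_n)_{n \ge 0}$ is then strictly increasing, so the half-open intervals $[R_n, R_{n+1})$, $n \ge 0$, form a partition of $[0,\infty)$.

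The second step is the pathwise identification. Fix a sample point in the above event and fix $t \ge 0$, and let $n = K_t$, i.e.\ $R_n \le t < R_{n+1}$, equivalently $t \in [R_n, R_{n+1})$. If $n = 0$ this is exactly the event $t < R_1$, and \eqref{eq:semi-markov} gives $\xi_t = x\, I_{t<R_1} = x = Y_0 = Y_{K_t}$. If $n \ge 1$, then in the sum of \eqref{eq:semi-markov} only the term with index $n$ is nonzero, so $\xi_t = Y_n = Y_{K_t}$. Since, by the partition property, exactly one of these cases occurs, we conclude $\xi_t = Y_{K_t}$ for all $t \ge 0$ almost surely. This is the semi-Markovian counterpart of the uniformization identity $H_t = Y_{N_t}$ of Theorem~\ref{prop:explicit_form}, the difference being that the Poisson process $N$ is replaced by the renewal counting process $K$.

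For independence, I would simply observe that $K$ is a deterministic measurable functional of $(R_n)_{n \in \mathbb{N}}$ alone --- for instance $K_t = \sum_{n \ge 1} I_{R_n \le t}$ for each $t$ --- so that the whole process $K$ is $\sigma\big((R_n)_{n \in \mathbb{N}}\big)$-measurable; since $(R_n)_{n \in \mathbb{N}}$ is independent of $Y$ by the hypothesis in \eqref{eq:semi-markov}, $K$ is independent of $Y$. I do not anticipate a genuine obstacle: the only points that require any care are the almost sure divergence of $R_n$ (which is what guarantees $K_t < \infty$ and that the intervals tile $[0,\infty)$) and keeping the $R_0 = 0$, $Y_0 = x$ convention consistent with $K_0 = 0$; the rest is the same bookkeeping as in Theorem~\ref{prop:explicit_form} and Corollary~\ref{corol:explicit_form}.
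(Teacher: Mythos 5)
Your proof is correct, and it is essentially the argument the paper intends: the paper omits an explicit proof, stating only that it is ``analogous to the corresponding SF-Harris process case,'' and your pathwise identification of $t$ in the partition $[R_n,R_{n+1})$ together with the observation that $K$ is a measurable functional of $(R_n)_{n\in\mathbb{N}}$, hence independent of $Y$, is exactly that analogue of Theorem~\ref{prop:explicit_form} and Corollary~\ref{corol:explicit_form}. The two points you single out for care (almost sure divergence of $R_n$ and the $R_0=0$, $Y_0=x$ convention) are indeed the only ones needing attention, and you handle both correctly.
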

 Notice that $K$ is a renewal counting process representing the number of arrivals of the process  \eqref{eq:semi-markov} in the interval $(0,t]$.
 
 \begin{theorem} \label{rep_semi-Markov_int}
	Let $\xi = (\xi_t)_{t \geq 0}$ be as in \eqref{eq:semi-markov}; and let K be as in Theorem \ref{rep_semi-Markov}. The integrated process, $\xi^* = (\xi_t^*)_{t \geq 0}$, where $\xi_t^* = \int_0^t \xi_s ds$,  is given by
\begin{align*}
	\xi_t^* = \sum\limits_{n=0}^{K_t-1} (Y_n-Y_{K_t}) V_{n+1} + Y_{K_t}t,
\end{align*}
taking $Y_0 = x$.
\end{theorem}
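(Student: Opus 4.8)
The plan is to obtain the formula by integrating the piecewise-constant representation \eqref{eq:semi-markov} directly over $[0,t]$, mimicking the proof of Theorem \ref{repH_int} with the exponential inter-arrival times $S_n$ replaced by the general increments $V_n$ and the Poisson counting process replaced by the renewal counting process $K$. I would first record the conventions $R_0 = 0$, $Y_0 = x$, under which \eqref{eq:semi-markov} reads $\xi_s = \sum_{n \ge 0} Y_n\,\mathds{1}_{s \in [R_n, R_{n+1})}$, equivalently $\xi_s = Y_{K_s}$ by Theorem \ref{rep_semi-Markov}.

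The first substantive step is to note that, for each fixed $t$, the index $K_t$ is almost surely finite: since the $V_n$ are i.i.d.\ and strictly positive, the strong law of large numbers gives $R_n/n \to \expect*{V_1} \in (0,\infty]$, so $R_n \to \infty$ a.s., and hence only finitely many renewals occur before time $t$. This legitimises writing $\xi^*_t = \int_0^t \xi_s\,ds$ as a \emph{finite} sum of the contributions of the intervals of constancy: almost surely, $[0,t]$ decomposes as the disjoint union of $[R_n, R_{n+1})$ for $n = 0, \dots, K_t - 1$, on each of which $\xi_s = Y_n$, together with the (possibly degenerate) last piece $[R_{K_t}, t]$, on which $\xi_s = Y_{K_t}$. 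Integrating term by term,
\[
\xi^*_t \;=\; \sum_{n=0}^{K_t-1} Y_n\,(R_{n+1}-R_n) + Y_{K_t}\,(t - R_{K_t}) \;=\; \sum_{n=0}^{K_t-1} Y_n\,V_{n+1} + Y_{K_t}\,(t - R_{K_t}).
\]

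The last step is pure bookkeeping: substituting $R_{K_t} = \sum_{n=1}^{K_t} V_n = \sum_{n=0}^{K_t-1} V_{n+1}$ into the final term and regrouping yields
\[
\xi^*_t \;=\; \sum_{n=0}^{K_t-1}\bigl(Y_n - Y_{K_t}\bigr) V_{n+1} + Y_{K_t}\,t,
\]
which is the asserted identity; when $K_t = 0$ the sum is empty and this correctly reduces to $\xi^*_t = Y_0\,t = x\,t$, matching $\xi_s \equiv x$ on $[0,t)$. I do not expect any real obstacle here: the only point with content is the one-line renewal argument that $K_t < \infty$ a.s., which is what turns the integral into a finite sum; everything else is the same index manipulation already used for Theorem \ref{repH_int}, with the only mild subtlety being the correct accounting of the incomplete last interval $[R_{K_t}, t]$.
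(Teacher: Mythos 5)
Your proposal is correct and follows essentially the same route as the paper, which proves the Poisson case (Theorem \ref{repH_int}) by integrating the piecewise-constant representation term by term and regrouping via $R_{K_t}=\sum_{n=0}^{K_t-1}V_{n+1}$, and then declares the semi-Markov version analogous. Your added observation that $K_t<\infty$ almost surely (so the integral is a finite sum) is a harmless extra precaution not spelled out in the paper's argument, but otherwise the decomposition and index bookkeeping coincide with the paper's proof.
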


\section{Estimation}\label{App:estimation}

\subsection{Development of estimation methods for the SF-Harris process when Q is a GIG distribution} \label{App:estGIG} 

\vspace*{.5em}
\subsubsection*{The no difference no jump method (NDNJ)}
We locate the set $J=\{j:x_j \neq x_{j-1}\}$. If $J$ is empty, then $\hat{\alpha}=0$, otherwise $\hat{\alpha}=\left(\frac{1}{|J|}\sum_{j \in J} t_j\right)^{-1}$. To obtain $(\hat{\lambda}, \hat{\kappa},\hat{\eta})$ we maximize numerically 
\begin{align*}
	L(\lambda, \kappa, \eta) = \sum_{j \in J} \log \mathsf{GIG}(x_j|\lambda, \kappa, \eta)
\end{align*}
using the BFGS quasi-Newton algorithm (named after Broyden, Fletcher, Goldfarb, and Shanno). Such an algorithm has proven to have good performance for non-smooth optimizations \citep[see, e.g.,][]{nocedal2006numerical}.

\subsubsection*{Maximum likelihood estimation (MLE)}

The corresponding log-likelihood is four-dimensional. Because of the convoluted form of its gradient we apply the Nelder-Mead method \citep{Nelder1965simplex} to maximize it.

\subsubsection*{Expectation-maximization algorithm (EM)}
	
It remains to expand the expression $\sum \limits_{i=0}^n p_i^{(k-1)}\log Q(x_i|\beta)$, which in this case is proportional to
\begin{align*}
	\sum \limits_{i=0}^n p_i^{(k-1)} \left[ - \log\{K_{\lambda}(\kappa)\} - \lambda \log(\eta) + (\lambda-1) \log(x_i)  -\frac{\kappa}{2}\left(\frac{\eta}{x_i} + \frac{x_i}{\eta}\right)\right]
\end{align*}
or, alternatively,
\begin{align*}
	- m^{(k-1)}\log\{K_{\lambda}(\kappa)\} - m^{(k-1)} \lambda \log(\eta)+ (\lambda-1) S_1^{(k-1)}  -\frac{\kappa}{2}\left(\eta S_2^{(k-1)} + \frac{1}{\eta} S_3^{(k-1)}\right),
\end{align*}
where $m^{(k-1)} = \sum\limits_{i = 1}^n p_i^{(k-1)}$, $S_1^{(k-1)} = \sum\limits_{i = 1}^n p_i^{(k-1)}\log(x_i)$, $S_2^{(k-1)} = \sum\limits_{i = 1}^n p_i^{(k-1)}/x_i$, and $S_3^{(k-1)} = \sum\limits_{i = 1}^n p_i^{(k-1)}x_i$.\\

We maximize numerically at each iteration, using BFGS algorithm.

\subsubsection*{Gibbs sampler}

We assume $\pi(\lambda, \kappa, \eta) = \pi(\lambda)\pi(\kappa)\pi(\eta)$, with $\lambda \sim \mathsf{N}(\mu_\lambda, \sigma^2_\lambda)$, $\kappa \sim \mathsf{Ga}(a_\kappa, b_\kappa)$, and $\eta \sim \mathsf{Ga}(a_\eta, b_\eta)$. The final joint log-density function is
\begin{align*}
	\log \pi(\lambda, \kappa, \eta | \alpha, \textbf{x}, \textbf{z}) &=  \sum_{i=1}^n z_i \log Q(x_i|\beta) + \log \pi(\lambda)+ \log \pi(\kappa)+ \log \pi(\eta).
\end{align*}

With an analogous calculation to the one done in the EM method we obtain
\begin{align*}
	\sum \limits_{i=0}^n z_i\log Q(x_i|\beta) &= - m\log\{K_{\lambda}(\kappa)\} - m \lambda \log(\eta)+ (\lambda-1) S_1 -\frac{\kappa}{2}\left(\eta S_2 + \frac{1}{\eta} S_3\right),
\end{align*}
where $m=\sum_{i = 1}^n z_i$, $S_1 = \sum_{i = 1}^n z_i\log(x_i)$, $S_2 = \sum_{i = 1}^n z_i/x_i$, and $S_3 = \sum_{i = 1}^n z_ix_i$. Therefore, the full conditional log-distributions are:
\begin{align*}
	\log \pi(\lambda | \alpha, \kappa, \eta, \textbf{x}, \textbf{z}) &=  - n\log\{K_{\lambda}(\kappa)\} - \lambda n \log(\eta) + (\lambda-1) S_1  - \frac{1}{2\sigma^2_\lambda}(\lambda-\mu_\lambda)^2,\\
	\log \pi(\kappa | \alpha, \lambda, \eta, \textbf{x}, \textbf{z}) &= - n\log\{K_{\lambda}(\kappa)\}  -\frac{\kappa}{2}\left(\eta S_2 + \frac{1}{\eta} S_3\right)  + (a_\kappa-1)\log(\kappa) - b_\kappa \kappa,\\
	\log \pi(\eta | \alpha, \lambda, \kappa, \textbf{x}, \textbf{z}) &= - \lambda n \log(\eta)  -\frac{\kappa}{2}\left(\eta S_2 + \frac{1}{\eta} S_3\right)  + (a_\eta-1)\log(\eta) - b_\eta \eta.
\end{align*}
We use the Adaptive Rejection Metropolis Sampling method to simulate them.

\subsection{KL divergence of GIG distributions} \label{App:KL} 

\begin{figure}[p]
    \centering
	\includegraphics[scale = 0.75]{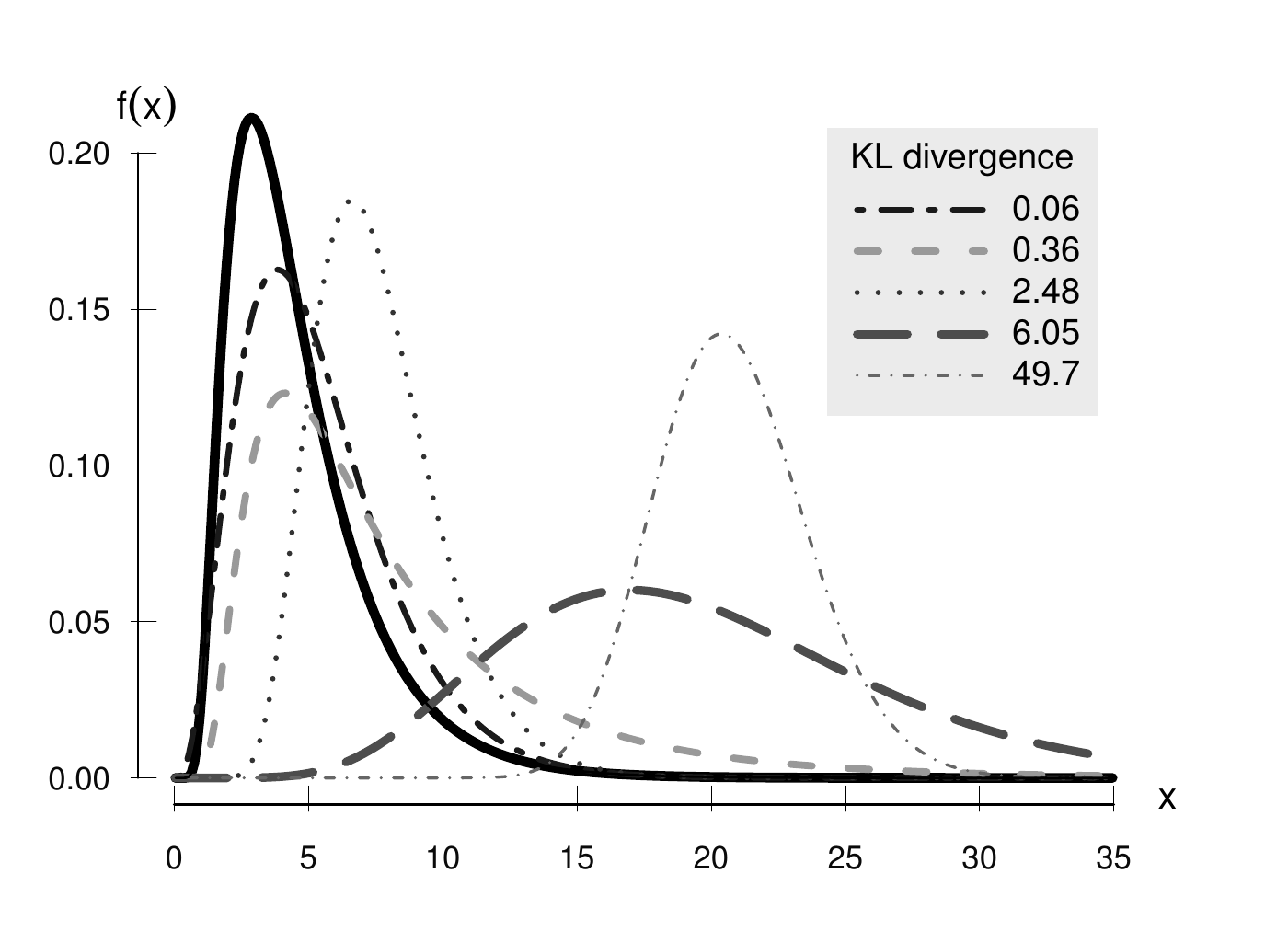}   
	\caption{In solid appears the $\mathsf{GIG}(0,3,4)$ density function. The dotted lines are different approximations of the GIG class to such a distribution along with their $\mathsf{KL}$ divergence.}
    \label{fig:KL}
\end{figure} 

The Kullback-Leibler divergence $\mathsf{KL}(p, \hat{p})$ is a non-symmetric similarity measure between two probability distribution functions $p$ and $\hat{p}$ defined as
\begin{align*}
	\mathsf{KL}(p, \hat{p})= \int_\mathcal{X} p(x) \log \frac{p(x)}{\hat{p}(x)} dx.
\end{align*}

We use it as a measure of the amount of information lost by working with the estimated model instead of the real one. If $p(x) = \mathsf{GIG}(x|\lambda, \kappa, \eta)$ and $\hat{p}(x) = \mathsf{GIG}(x|\hat{\lambda}, \hat{\kappa}, \hat{\eta})$, then
\begin{align*}
	\frac{p(x)}{\hat{p}(x)} &= \frac{K_{\hat{\lambda}}(\hat{\kappa})}{K_{\lambda}(\kappa)} \frac{\hat{\eta}^ {\hat{\lambda}}}{\eta^\lambda} x^{\lambda-\hat{\lambda}}  \exp \left[ -\frac{1}{2} \left\{(\kappa \eta -\hat{\kappa} \hat{\eta})\frac{1}{x} + \left( \frac{\kappa }{\eta} - \frac{\hat{\kappa} }{\hat{\eta}} \right) x\right\} \right]. 
\end{align*}
It follows that
\begin{align*}
	\mathsf{KL}(p, \hat{p}) = f(\lambda, \kappa,& \eta, \hat{\lambda}, \hat{\kappa}, \hat{\eta}) + (\lambda-\hat{\lambda}) \expect*{\log X} \\ &-\frac{1}{2} \left\{(\kappa \eta -\hat{\kappa} \hat{\eta})\expect*{\dfrac{1}{X}} + \left( \frac{\kappa }{\eta} - \frac{\hat{\kappa} }{\hat{\eta}} \right) \expect*{X}\right\},
\end{align*}
where $X \sim p$, and $f(\lambda, \kappa, \eta, \hat{\lambda}, \hat{\kappa}, \hat{\eta}) = \log\{K_{\hat{\lambda}}(\hat{\kappa})\} - \log \{K_{\lambda}(\kappa)\} + \hat{\lambda} \log (\hat{\eta}) - \lambda \log (\eta)$.

Making $R_{\lambda}(\kappa) = K_{\lambda+1}(\kappa)/K_{\lambda}(\kappa)$, it is known that: 
\begin{itemize}
	\item[(i)] $\expect*{X} = R_{\lambda}(\kappa)\eta$.
	\item[(ii)] $\expect*{X^{-1}} = \eta^{-1}\left( R_{\lambda}(\kappa)-2\lambda \kappa ^{-1}\right)$.
	\item[(iii)] $\expect*{\log(X)} = \log(\eta) + \dfrac{d}{d\lambda} K_{\lambda}(\kappa)\left\{K_{\lambda}(\kappa)\right\}^{-1}.$
\end{itemize}

\begin{table}
	\caption{\label{table:KL}KL divergence between $p(x) = \mathsf{GIG}(x|0, 3, 4)$ and $\hat{p}(x) = \mathsf{GIG}(x|\hat{\lambda}, \hat{\kappa}, \hat{\eta})$}  
	\begin{center}
	\begin{tabular}{c | c c c c c}  
		$\mathsf{KL}(p, \hat{p})$ & 0.06 &  0.36 &  2.48 & 6.05 & 49.75\\ [.3ex] 
		\hline\\[-3.5ex] 
		$\hat{\lambda}$ & 3.5 & -1 & 2 & 7 & 50 \\
		$\hat{\kappa}$ & 0.8 & 2 & 10 & 3 & 20 \\
		$\hat{\eta}$ & 0.6 & 10  & 6  & 4  & 4 \\
		\hline
	\end{tabular}  
	\end{center}
\end{table}

As an example, we set $\lambda = 0$, $\kappa = 3$, and $\eta =4$, and compute the KL divergence for different approximations (Table \ref{table:KL} and Figure \ref{fig:KL}). We can observe that in certain situations, although the approximated parameters are extremely different from the real one, the distribution approximation is good. Because of this, the divergence is used to measure the fit, rather than an established measure for each parameter.

\subsection{Final density of \texorpdfstring{$\mu$} \ \ and \texorpdfstring{$\beta$}{Lg}} \label{App:estimation_mu_beta}

To facilitate the calculations display we assume that times are equidistant and $h$ denotes the distance between them. We also assume $\pi(\mu,\beta) = \pi(\mu)\pi(\beta)$, with $\mu \sim \mathsf{N}(m_\mu, \sigma_\mu^2)$, and $\beta \sim \mathsf{N}(m_\beta, \sigma_\beta^2)$. Their joint posterior density is given by
\begin{align*}
	\pi(\mu, \beta | R_1,...,R_{n-1}) = f(R_1,...,R_{n-1}|\mu,\beta)\pi(\mu, \beta),
\end{align*}
where
\begin{align*}
	f(R_1,...,R_{n-1}|\mu,\beta) &= f(R_1,...,R_{n-1}|H_1^*,...,H_{n-1}^*, \mu,\beta)f(H_1^*,...,H_{n-1}^*| \mu,\beta)\\
		&\propto  \exp \left\{ \sum_{i=1}^{n-1} -\dfrac{1}{2H_i^*}(R_i-\mu h - \beta H_i^*)^2 \right\},
\end{align*}
since $R_1|H_1^*,...,R_{n-1}|H_{n-1}^*$ are independent, and $H_1^*,...,H_{n-1}^*$ do not depend on $\mu$ or $\beta$.\\

After some algebra we obtain that, making $\bar{R} = \frac{1}{n-1}\sum\limits_{i=1}^{n-1} R_i$, $\bar{R_2} = \frac{1}{n-1}\sum\limits_{i=1}^{n-1} \dfrac{R_i}{H_i^*}$, $\bar{H}^* = \frac{1}{n-1}\sum\limits_{i=1}^{n-1}  H_i^*$, and $\bar{H}^*_2 = \frac{1}{n-1} \sum\limits_{i=1}^{n-1} \dfrac{1}{H_i^*}$, the last expression becomes
\begin{align*}
	f(R_1,...,R_{n-1}|\mu,\beta) &\propto \exp \left\{ -\dfrac{1}{2} \left( \mu^2 h^2 \bar{H}^*_2 -2  \mu h \bar{R_2} +2 h \mu \beta    + \beta^2 \bar{H}^* -2  \beta \bar{R}  \right) \right\}.
\end{align*}
Since
\begin{align*}
	\pi(\mu) \propto \exp \left\{-\dfrac{1}{2\sigma_\mu^2} \left(\mu^2-2\mu m_\mu\right) \right\}\ \text{and} \
	\pi(\beta) \propto \exp \left\{-\dfrac{1}{2\sigma_\beta^2} \left(\beta^2-2\beta m_\beta\right) \right\},
\end{align*}
taking $A = h^2 \bar{H}^*_2+ (\sigma_\mu^2)^{-1}$, \ $B = h \bar{R_2} + (m_\mu)(\sigma_\mu^2)^{-1}$, \ $C = h$, \ $D = \bar{H}^* + (\sigma_\beta^2)^{-1}$, \ and $E = \bar{R} + (m_\beta)(\sigma_\beta^2)^{-1}$, we get
\begin{align*}
	\pi(\mu, \beta | R_1,...,R_{n-1}) &\propto \exp \left\{ -\dfrac{1}{2} \left( \mu^2 A -2  \mu B +2 \mu \beta  C +  \beta^2 D -2  \beta E \right) \right\}.
\end{align*}
Now marginalizing,
\begin{align*}
	\pi(\mu | R_1,...,R_{n-1})  &\propto \exp\left\{ -\dfrac{1}{2} \left(  \mu^2 A -2  \mu B \right) \right\} \int \exp \left[ -\dfrac{1}{2} \left\{ \beta^2 D - 2\beta ( E -  \mu C) \right\} \right] d\beta\\
	&\propto \exp\left\{ -\dfrac{1}{2} \left(  \mu^2 A -2  \mu B \right) \right\}   \exp \left\{-\dfrac{1}{2} \left(- \frac{\mu^2 C^2 -2\mu E C}{D}  \right) \right\}\\
	&\propto \exp\left[ -\dfrac{1}{2} \left(A - \frac{C^2}{D}\right) \left\{  \mu   - \left(  \frac{DB  - E C}{AD - C^2} \right)   \right\}^2 \right].
\end{align*}
So if we assign $F = AD - C^2$, then $\mu | (R_1,...,R_{n-1}) \sim N(\tilde{m_\mu}, \tilde{\sigma^2_\mu})$, with $\tilde{m_\mu} = (DB  - E C)F^{-1}$, and $\tilde{\sigma^2_\mu} = DF^{-1}$. With an analogous calculation it can be proven $\beta | (R_1,...,R_{n-1}) \sim N(\tilde{m_\beta}, \tilde{\sigma^2_\beta})$, with $\tilde{m_\beta} = (EA  - BC)F^{-1}$, and $\tilde{\sigma^2_\beta} = AF^{-1}$.

\section{Empirical Analysis}\label{App:empirical_analysis}

\subsection{Cleaning procedure}\label{App:cleaning_procedure}

To clean the data we implemented the step-by-step procedure proposed by \citet{barndorff2009realized}.  Part of the procedure is to delete (or replace) entries with a repeated time stamp, a zero transaction price, or a time stamp outside the exchange hours. No entries were found in any of these cases. Hence, we proceeded by creating a single variable that represents the stock price. We computed the standard deviation between the open and close price for each observation, finding is less than 0.1 for 95 percent of the sample. This suggests that creating such a new variable by averaging the open and close prices makes sense, we call it \textit{average price}.

Two remaining steps of the cleaning procedure were run using the average price. The first one is to delete entries for which the average price exceeds by more than 50 times the median on that day. Once again no entries were removed. The second step is to replace entries for which the average price deviated by more than 10 mean absolute deviations from a rolling centered median (excluding the observation under consideration) of 50 observations (25 observations before and 25 after). This was performed in a slightly different way, restricting the observations from the rolling median to be on the same day. A total of 154 entries was replaced in this step, which is around 0.05 percent of the sample size.  

\subsection{Jump detection}\label{App:jump_detection}

To detect the jumps we use the bipower variation. The bipower variation, introduced in \citet{barndorff2004power}, equals the quadratic variation of the continuous component and, in a range of cases, it produces an estimator of integrated volatility in the presence of jumps. It can be consistently estimated, and the estimator is called realized bipower variance.  

The difference between the realized variance and the realized bipower variance is, consequently, a good indicator for the squared jumps; but nothing prevents this difference from becoming negative in a given finite sample. Thus, following \citet{barndorff2004power} suggestion, we calculate at every fifteen minutes the maximum value between such a difference and zero. The top 0.1 percent values are considered intervals with jumps. We explore individually those fifteen minute intervals, marking the entry that differs the most from the interval mean as a jump. Notice we mark exactly one jump in each interval. This could be restrictive and is fixed by running the procedure a few times. We found that in these data it suffices to run it twice, so around 0.2 percent of the sample observations were considered as jumps and deleted. 

\subsection{Periodic component estimation}\label{App:periodicity_estimation}

The most natural method for the periodic component estimation is to approximate the expected value with the sample mean, 
\begin{align*}\label{eq:f_estimator}
	\hat{f}(t) = \dfrac{1}{|G(t)|} \sum_{r \in G(t)} \dfrac{\tau_r}{\frac{1}{d}\int_{w(r)} \tau_s ds},
\end{align*}
with $G(t) = \{r: c(r)=t \}$, and $w(r)$ the local window which contains $r$; standardizing such an approximation so it meets the condition $\frac{1}{d}\int_0^d f(s)ds = 1$. Nonetheless, this estimator may be biased in the presence of jumps. Since we have eliminated the sample jumps, we can apply it without such concern. Still if, for a given $t$, the standard deviation of the $G(t)$ values exceeded the standard deviation of the sample; we computed the mean with those values lying below the $0.9$-quantile.  It is noteworthy that, when evidence of jump presence is found, robust estimation methods can be further explored; extending the ideas exposed in \citet{boudt2011robust} to the GIG-Harris process.

\section{Proofs}\label{App:proofs}

\begin{proof}[Theorem \ref{repH_int}] 
\begin{align*}
	H^*_t &=  \int_0^t \{ x \mathds{1}_{s<T_1}+\sum\limits_{n=1}^\infty Y_n \mathds{1}_{s\in {[T_n, T_{n+1})} } \} \ ds \\
		&=  x \int_0^{t \land T_1} ds + \sum\limits_{n=1}^\infty Y_n \int_{t \land T_n}^{t \land T_{n+1}} ds.
\end{align*}
Since $Y_0 = x$ and $T_0 = 0$, it follows
\begin{align*}
	H^*_t =  \sum\limits_{n=0}^\infty Y_n \{(t \land T_{n+1}) - (t \land T_n)\}. 
\end{align*}
Equivalently, 
\begin{align*}  
	H^*_t &=  \sum\limits_{n=0}^{N_t-1} Y_n (T_{n+1}-T_n) + Y_{N_t}(t-T_{N_t})\\
		&= \sum\limits_{n=0}^{N_t-1} Y_n S_{n+1} + Y_{N_t}\left(t-\sum\limits_{n=0}^{N_t-1} S_{n+1}\right) \\
		&= \sum\limits_{n=0}^{N_t-1} (Y_n-Y_{N_t}) S_{n+1} + Y_{N_t}t.
\end{align*}
\end{proof}

\begin{lemma} \label{lemma1} 
For any sequence of independent identically distributed random variables $X_1,...,X_n$, and any $a_1,...,a_n \in \mathbb{R}$ whose sum is nonzero, it holds that 
\begin{align*}
\expect*{\dfrac{X_j}{\sum_{i=1}^{n}a_iX_i}} = \dfrac{1}{\sum_{i=1}^n a_i}, \  \ j=1,...,n.
\end{align*}
\end{lemma}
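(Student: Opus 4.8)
The plan is to combine the trivial algebraic identity satisfied by the summands $X_j/\sum_i a_iX_i$ with the permutation symmetry supplied by the i.i.d.\ hypothesis. Throughout I take for granted, as the statement implicitly requires, that $\sum_{i=1}^n a_iX_i\neq 0$ almost surely; in the use made of the lemma the $X_i$ are strictly positive and the $a_i$ are nonnegative with positive sum, so the ratios are in fact bounded by $1$ and integrability is not an issue.

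\textbf{Step 1 (a deterministic identity).} On the event $\{\sum_i a_iX_i\neq0\}$ one has, pointwise,
\begin{align*}
\sum_{j=1}^n a_j\,\frac{X_j}{\sum_{i=1}^n a_iX_i}=\frac{\sum_{j=1}^n a_jX_j}{\sum_{i=1}^n a_iX_i}=1 .
\end{align*}
Taking expectations and using linearity,
\begin{align*}
\sum_{j=1}^n a_j\,\mathsf{E}\!\left[\frac{X_j}{\sum_{i}a_iX_i}\right]=1 .
\end{align*}

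\textbf{Step 2 (symmetry in the index).} Write $e_j:=\mathsf{E}\!\left[X_j\big/\sum_i a_iX_i\right]$. If one can show that $e_j$ is the same for every $j$, say $e_j\equiv e$, then Step 1 reads $e\sum_j a_j=1$, and since $\sum_j a_j\neq0$ this forces $e=1/\sum_j a_j$, which is the assertion. The natural way to get $e_j=e_k$ is to invoke exchangeability of $(X_1,\dots,X_n)$: its law is invariant under transposing the $j$th and $k$th coordinates, and one pushes this change of variables through the expectation.

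\textbf{Where I expect the difficulty.} Step 2 carries essentially all the content, and it is subtler than the one-line exchangeability slogan suggests, because the function $(x_1,\dots,x_n)\mapsto x_j/\sum_i a_ix_i$ is \emph{not} symmetric in its arguments once the $a_i$ are not all equal: transposing the $j$th and $k$th coordinates also swaps the attached weights, so relabelling the variables sends $e_j$ to $\mathsf{E}\!\left[X_k\big/(a_jX_k+a_kX_j+\sum_{i\neq j,k}a_iX_i)\right]$ rather than to $e_k$. I would therefore expect the honest argument either to carry the extra hypothesis that the weights are equal — in which case Step 2 is the classical fact that $\mathsf{E}[X_j/\sum_iX_i]=1/n$ for i.i.d.\ summands — or to replace Step 2 by a direct computation, for instance via $1\big/\sum_i a_iX_i=\int_0^\infty e^{-u\sum_i a_iX_i}\,du$, which turns $e_j$ into $\int_0^\infty\bigl(-\phi'(ua_j)\bigr)\prod_{i\neq j}\phi(ua_i)\,du$ with $\phi(v)=\mathsf{E}[e^{-vX_1}]$, after which one attempts to verify that the integral equals $1/\sum_i a_i$. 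Settling which of these routes is intended — and, in particular, pinning down the precise hypotheses under which the index symmetry in Step 2 genuinely holds — is the part I expect to be the real obstacle.
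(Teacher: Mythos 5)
Your proposal stops short of a proof (all the content is in your Step 2, which you leave open), but your suspicion about that step is exactly right, and it points at a genuine gap in the paper's own argument. The paper's proof of Lemma \ref{lemma1} is precisely your Step 1 plus the bald assertion of index-independence: it begins by setting $k=\mathsf{E}\bigl[X_j/\sum_{i}a_iX_i\bigr]$ ``for all $j$'', i.e.\ it \emph{assumes} the expectation is the same for every $j$, and then concludes $1=k\sum_i a_i$. As you note, exchangeability of $(X_1,\dots,X_n)$ delivers that common value only when the weights are equal, and in the stated generality the claim is in fact false. Take $n=2$, $X_1,X_2$ i.i.d.\ exponential and $(a_1,a_2)=(2,1)$; with $U=X_1/(X_1+X_2)\sim\mathsf{U}(0,1)$ one has $X_1/(2X_1+X_2)=U/(1+U)$ and $X_2/(2X_1+X_2)=(1-U)/(1+U)$, so that
\begin{align*}
\mathsf{E}\!\left[\frac{X_1}{2X_1+X_2}\right]=1-\ln 2\approx 0.31,
\qquad
\mathsf{E}\!\left[\frac{X_2}{2X_1+X_2}\right]=2\ln 2-1\approx 0.39,
\end{align*}
neither of which equals $1/(a_1+a_2)=1/3$; their $a$-weighted sum is $1$, which is all your Step 1 identity can ever give. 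Even simpler, $(a_1,a_2)=(1,0)$ yields $\mathsf{E}[X_2/X_1]=\mathsf{E}[X_2]\,\mathsf{E}[1/X_1]>1$ by Jensen unless $X_1$ is degenerate.

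So of the two ways out you sketch, the first is the correct one: the lemma holds, and your exchangeability argument proves it, under the additional hypothesis $a_1=\cdots=a_n$ (or whenever the law of $(X_1,\dots,X_n)$ otherwise forces the $n$ expectations to coincide); the Laplace-transform route cannot rescue the general statement because the general statement is false. Be aware also that the restriction is not innocuous downstream: in the paper's application of the lemma, inside the conditional expectation in the proof of Proposition \ref{prop:periodic_comp}, the weights are $a_j\propto\hat{B}_j(v_j-v_{j-1})$, which are unequal as soon as $f$ is non-constant, so that step inherits the same gap rather than being covered by the equal-weight case.
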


\begin{proof}[Lemma \ref{lemma1}] 
Let 
\begin{align*}
	k= \expect*{\dfrac{X_j}{\sum_{i=1}^{n}a_iX_i} }
\end{align*}
 for all $j=1,...,n$, then,
 \begin{align*}
	 1 = \expect*{\frac{\sum_{i=1}^n a_iX_i}{\sum_{i=1}^{n}a_iX_i} } = \sum_{i=1}^{n} a_i\expect*{\frac{X_i}{\sum_{i=1}^{n}a_iX_i} } = k\sum_{i=1}^{n} a_i.
\end{align*}
\end{proof}

\begin{proof}[Proposition \ref{prop:periodic_comp}] 
The case $N_d = 0$ is trivial, hence, we assume $N_d>0$. First, we consider that $f$ is a measurable simple function, meaning that 
\begin{align*}
	f(t)=\sum_{j=1}^{m}B_jI_{(b_{j-1}, b_j]}(t),
\end{align*}
where $0=b_0<b_1<\cdots<b_{m-1}<b_m=d$, and $B_1,...,B_m  \in \mathbb{R}$. Then, it follows that, taking $\mathcal{F}_d = \sigma(	N_s:s \leq d)$, 
\begin{align*}
	\expect*{\frac{Y_{N_t}}{\frac{1}{d}\int_0^d f(s)Y_{N_s} ds}} = \expect*{ \expect*{\frac{Y_{N_t}}{\frac{1}{d} \int_0^d f(s)Y_{N_s} ds} | \mathcal{F}_d } }.
\end{align*}
Now, denoting with $S_1,S_2,...$ the time between jumps of $N$, given $\mathcal{F}_d$, let $n = N_d$, We define $J_0 = 0$, $J_i = \sum_{k=0}^{i-1} S_k$ for  $i = 1,...,n-1$, and $J_n = d$.
Next, we create a new partition of the interval $[0,d]$ by taking the common refinement of partitions $0=b_0<\cdots<b_m=d$ and $0=J_0<\cdots<J_n=d$ (consisting of all different points from the two partitions renamed in order). Suppose we get the partition $0=v_0<\cdots<v_r=d$. There is a representation of $f$ in terms of such a partition,
\begin{align*}
	f(t) = \sum_{j=1}^{r}\hat{B}_jI_{(v_{j-1}, v_j]}(t),
\end{align*}
and, consequently,
\begin{align*}
	\int_0^d f(s)Y_{N_s}ds = \sum_{j=1}^r \hat{B}_jY_j(v_j - v_{j-1}).
\end{align*}

Moreover, $N_t \in \{0,...,N_d\}$, so $S_0,...,S_{N_d-1}$, $N_t$, and $N_d$ are $\mathcal{F}_d$-measurable. Therefore, applying Lemma \ref{lemma1} we get
\begin{align*}
	 \expect*{\frac{Y_{N_t}}{\frac{1}{d} \int_0^d f(s)Y_{N_s} ds} | \mathcal{F}_d  } = \frac{1}{\frac{1}{d} \sum_{j=1}^{r}\hat{B}_j (v_j-v_{j-1})},
\end{align*}
but
\begin{align*}
	\frac{1}{d} \sum_{j=1}^{r}\hat{B}_j (v_j-v_{j-1}) = \frac{1}{d} \int_0^d f(s)ds = 1.
\end{align*}
So we obtain the desired equality for measurable simple functions. 

Now, if $f$ is any measurable function, then there is an increasing sequence $(f_n)_{n \in \mathbb{N}}$ of measurable simple functions which converge punctually to $f$ (almost surely). Thus, as is customary in proofs, the desired equality follows using the case proven for measurable simple functions, and applying the Monotone Convergence Theorem.
\end{proof}

\clearpage
\bibliographystyle{chicago}
\begin{small}
	\bibliography{HarrisProcess_Anzarut_Mena}
\end{small}

\end{document}